\newtheorem{algorithm}{Algorithm}
\newtheorem{clm}[theorem]{Claim}{\bfseries}{\itshape}
\newcommand{\commentout}[1]{}
\newcommand{\commentin}[1]{#1}
\newcommand{\mysubsub}[1]{\vskip 2pt \noindent {\bf #1.}}
\newcommand{\E}{\mathbb{E}}
\newcommand{\mdp}{M}
\newcommand{\rh}{_{\rho}}
\newcommand{\starr}{\ensuremath{*}}
\newcommand{\sa}{\ensuremath{a}\xspace}
\newcommand{\sh}{\ensuremath{h}\xspace}
\newcommand{\lt}{Low\text{-}Triangle}
\title{Optimal Selfish Mining Strategies in Bitcoin}
\author{Ayelet Sapirshtein \inst{1} \and Yonatan Sompolinsky\inst{1} \and Aviv Zohar\inst{1,2}}
\institute{School of Engineering and Computer Science,\newline The Hebrew University of Jerusalem, Israel\and Microsoft Research, Herzliya, Israel \\ \email{\{ayeletsa,yoni\_sompo,avivz\}@cs.huji.ac.il}}
\newcommand{\ad}{\ensuremath{adopt}\xspace}
\newcommand{\ma}{\ensuremath{match}\xspace}
\newcommand{\ov}{\ensuremath{override}\xspace}
\newcommand{\wa}{\ensuremath{wait}\xspace}
\newcommand{\rev}{\ensuremath{REV}}
\newcommand{\sm}{SM1\xspace}
\newcommand{\hon}{\text{honest mining}\xspace}
\begin{document}
\maketitle

\begin{abstract}
Bitcoin is a decentralized crypto-currency, and an accompanying protocol, created in 2008.
Bitcoin nodes continuously generate and propagate blocks---collections of newly approved transactions that are added to Bitcoin's ledger. Block creation requires nodes to invest computational resources, but also carries a reward in the form of bitcoins that are paid to the creator.
While the protocol requires nodes to quickly distribute newly created blocks,
strong nodes can in fact gain higher payoffs by withholding blocks they create and selectively postponing their publication. The existence of such \emph{selfish mining} attacks was first reported by Eyal and Sirer~\cite{ES}, who have demonstrated a specific deviation from the standard protocol (a strategy that we name \sm).

In this paper we extend the underlying model for selfish mining attacks, and provide an algorithm to find $\epsilon$-optimal policies for attackers within the model, as well as tight upper bounds on the revenue of optimal policies.
As a consequence, we are able to provide lower bounds on the computational power an attacker needs in order to benefit from selfish mining. We find that the \emph{profit threshold} -- the minimal fraction of resources required for a profitable attack -- is strictly lower than the one induced by the \sm scheme. Indeed, the policies given by our algorithm dominate \sm, by better regulating attack-withdrawals.

Our algorithm can also be used to evaluate protocol modifications that aim to reduce the profitability of selfish mining. We demonstrate this with regard to a suggested countermeasure by Eyal and Sirer, and show that it is slightly less effective than previously conjectured.  Next, we gain insight into selfish mining in the presence of communication delays, and show that, under a model that accounts for delays, the profit threshold vanishes, and even small attackers have incentive to occasionally deviate from the protocol. We conclude with observations regarding the combined power of selfish mining and double spending attacks.
\end{abstract}

\section{Introduction}
In a recent paper, Eyal and Sirer~\cite{ES} have highlighted a flaw in the incentive scheme in Bitcoin. Given that most of the network follows the ``standard'' Bitcoin protocol, a single node (or a pool) which possesses enough computational resources or is extremely well connected to the rest of the network can increase its expected rewards by deviating from the protocol. While the standard Bitcoin protocol requires nodes to immediately publish any block that they find to the rest of the network, Eyal and Sirer have shown that participants can selfishly increase their revenue by selectively withholding blocks.
Their strategy, which we denote \sm, thus shows that Bitcoin as currently formulated is not incentive compatible.

On the positive side, \sm (under the model of Eyal and Sirer) becomes profitable only when employed by nodes that posses a large enough share of the computational resources, and are sufficiently well connected to the rest of the network.\footnote{This can partly explain why selfish mining attacks have not been observed in the Bitcoin network thus far.}
%The system can thus be safe if all miners have fractional resources below some threshold. Any policy which is a \emph{better-response} to honest mining (\sm, for instance) imposes an upper bound on this threshold. Yet obtaining the exact threshold requires considering the \emph{best-response} policy, which is the goal of this paper.
It is important to note, however, that \sm is not the optimal best-response to honest behaviour, and situations in which \sm is not profitable may yet have other strategies that are better than strict adherence to the protocol. Our goal in this paper is to better understand the conditions under which Bitcoin is resilient to selfish mining attacks. To this end, we must consider other possible deviations from the protocol, and to establish bounds on their profitability.

The role of incentives in Bitcoin should not be underestimated:
Bitcoin transactions are confirmed in batches, called \emph{blocks} whose creation requires generating the solution to computationally expensive proof-of-work ``puzzles''. The security of Bitcoin against the reversal of payments (so-called double spending attacks) relies on having more computational power in the hands of honest nodes. Block creation (which is also known as \emph{mining}), is rewarded in bitcoins that are given to the block's creator. These rewards incentivize more honest participants to invest additional computational resources in mining, and thus support the security of Bitcoin.

When all miners follow the Bitcoin protocol, a single miner's share of the payoffs is equal to the fraction of computational power that it controls (out of the computational resources of the entire network).
However, Selfish mining schemes allow a strong attacker to increase its revenue at the expense of other nodes. This is done by exploiting the conflict-resolution rule of the protocol, according to which only one chain of blocks can be considered valid, and only blocks on the valid chain receive rewards; the attacker creates a deliberate fork, and (sometimes) manages to force the honest network to abandon and discard some of its blocks.

The consequences of selfish mining attacks are potentially destructive to the Bitcoin system. A successful attacker becomes more profitable than honest nodes, and is able to grow steadily.\footnote{Growth is achieved either by buying more hardware, in the case of a single attacker, or by attracting more miners, in the case of a pool.\label{ftnt::growth}} It may thus eventually drive other nodes out of the system. Profits from selfish mining increase as more computational power is held by the attacker, making its attack increasingly effective, until it eventually holds over 50\% of the computational resources in the network. At this point, the attacker is able to collect all block rewards, to mount successful double spending attacks at will, and to block any transaction from being processed (this is known as the 50\% attack).

%Incentives play a significant role in the Bitcoin protocol. The protocol establishes an open system allowing anyone to run a node and partake in the approval of transactions. Nodes that participate in mining are asked to solve computationally expensive proof-of-work ``puzzles''. Incentives are crucial as they motivate participants to partake in the system. In particular, Satoshi Nakamoto's original white paper demonstrates that attackers who wish to disrupt the protocol and reverse payments that they have made (so-called double spending attacks) must essentially control a large fraction of the  computational resources in the network.\footnote{This result requires some additional assumptions, e.g., on the connectivity of Bitcoin's overlay network.} The Bitcoin network is thus considered more secure if more computational power is held by ``honest'' nodes who follow the protocol, and the incentives baked into the protocol have thus also added to its security. Rewards, that originate from collected transaction fees, and newly created coins are given to nodes that manage to create valid \emph{blocks}. Blocks are are essentially collections of transactions bundled together with a proof-of-work.
%(other problems exist, e.g., extraction of value via fees, red Balloons).

We summarize the contributions of this paper as follows:
%\noindent{\bf Poisitive Results:}
\begin{enumerate}
\item We provide an efficient algorithm that computes an $\epsilon$-optimal selfish mining policy for any $\epsilon>0$, and for any parametrization of the model in~\cite{ES} (i.e., one that maximizes the revenue of the attacker up to an error of $\epsilon$, given that all other nodes are following the standard Bitcoin protocol). We prove the correctness of our algorithm and analyze its error bound. We further verify all strategies generated by the algorithm in a selfish mining simulator that we have designed to this end.
\item Using our algorithm we show that, indeed, there are selfish mining strategies that earn more money and are profitable for smaller miners compared to \sm. The gains are relatively small (see Fig.~\ref{fig::res_rhos} below). This can be seen as a positive result, lower bounding the amount of resources needed for a profitable attacker.
\item Our technique allows us to evaluate different protocol modifications that were suggested as countermeasures for selfish mining. We do so for the solution suggested by Eyal and Sirer, in which miners that face two chains of equal weight choose the one to extend uniformly at random. We show that this modification unexpectedly enhances the power of medium-sized attackers, while limiting strong ones, and that unlike previously conjectured, attackers with less than 25\% of the computational resources can still gain from selfish mining.
\item We show that in a model that accounts for the delay of block propagation in the network, the threshold vanishes: there is always a successful selfish mining strategy that earns more than honest mining, regardless of the size of the attacker.
\item We discuss the interaction between selfish mining attacks and double spending attacks. We demonstrate how any attacker for which selfish mining is profitable can execute double spending attacks bearing no costs. This sheds light on the security analysis of Satoshi Nakamoto~\cite{SATOSHI}, and specifically, on the reason that it cannot be used to show high attack costs, and must instead only bound the probability of a successful attack.
\end{enumerate}

%The remainder of the paper is structured as follows: We begin by presenting the model, based principally on Eyal and Sirer's~\cite{ES} (Section~\ref{sec::model}). We then provide our algorithm and prove its correctness (Section~\ref{sec::alg}). Section~\ref{sec::upperbound}  shows a theoretical bound on the attacker's revenue. In Section~\ref{sec::additional_results} we present the outputs of the algorithm. Section~\ref{sec::delays} analyzes selfish mining in networks with delays. Section~\ref{sec::doublespend} discusses the interaction between selfish mining and double spending. We conclude with discussing related work, in Section~\ref{sec::related}.

Below, we depict the results of our analysis, namely, the revenue achieved by optimal policies compared to that of \sm as well as the profit threshold of the protocol. In the following, $\alpha$ stands for the attacker's relative hashrate, and $\gamma$ is a parameter representing the communication capabilities of the attacker: the fraction of nodes to which it manages to send blocks first in case of a block race (see Section~\ref{sec::model} for more details).
Figure~\ref{fig::res_rhos} depicts the revenue of an attacker under three strategies: Honest mining, which adheres to the Bitcoin protocol, \sm, and the optimal policies obtained by our algorithm. The three graphs correspond to $\gamma=0,0.5,1$. We additionally illustrate the curve of $\alpha/(1-\alpha)$, which is an upper bound on the attacker's revenue, achievable only when $\gamma=1$ (see Section~\ref{sec::upperbound}).
Figure~\ref{fig::thresholds} depicts the profit threshold for each $\gamma$: If the attacker's $\alpha$ is below the threshold then Honest mining is the most profitable strategy. For comparison, we depict the thresholds induced by \sm as well.

\begin{figure}[!h]
\centering
\subfigure[$\gamma=0$]{
\includegraphics[scale=0.417]{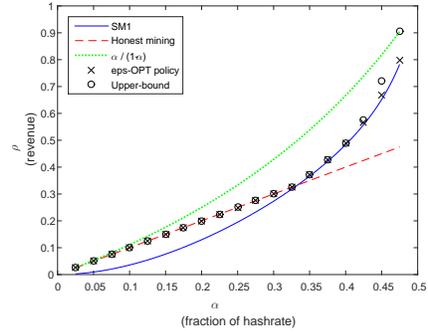}
}
\subfigure[$\gamma=0.5$]{
\includegraphics[scale=0.417]{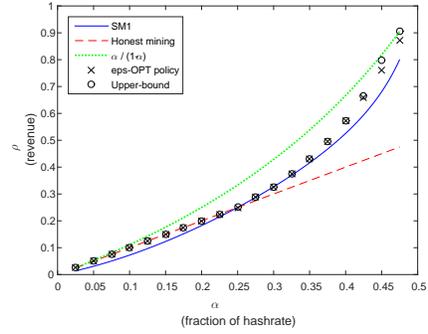}
}
\subfigure[$\gamma=1$]{
\includegraphics[scale=0.417]{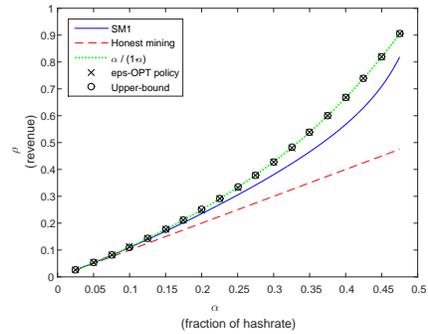}
}
\caption{The $\epsilon$-optimal revenue and the computed upper bound, as a function of the attacker's hashrate $\alpha$, compared to \sm, honest mining, and to the hypothetical bound provided in Section~\ref{sec::upperbound}. The graphs differ in the attacker's communication capability, $\gamma$, valued $0$, $0.5$, and $1$. The gains of the $\epsilon$-optimal policies are very close to the computed upper bound, except when $\alpha$ is close to $0.5$, in case which the truncation-imposed loss is apparent.
See also Table~\ref{table::revenue_gamma_0}.
}
\label{fig::res_rhos}
\end{figure}
\begin{figure}[!h]
\centering
\includegraphics[scale=0.6]{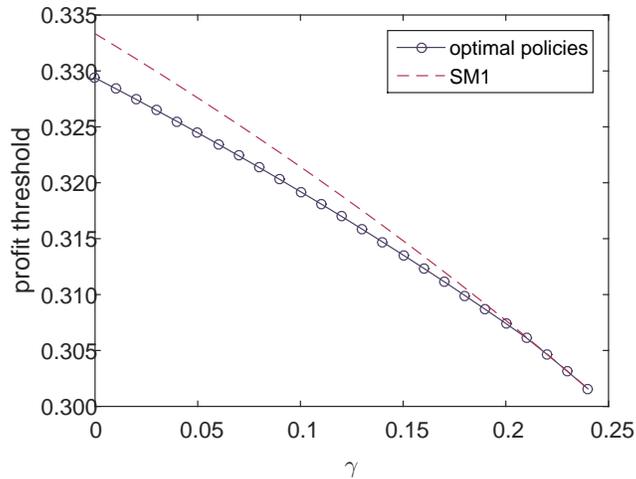}
\caption{The profit thresholds induced by optimal policies, and by \sm, as a function of $\gamma$. Thresholds at higher $\gamma$ values match that of \sm (but still, optimal strategies for these values earn more than \sm, once above the threshold).}
\label{fig::thresholds}
\end{figure}

The remainder of the paper is structured as follows: We begin by presenting our model, based principally on Eyal and Sirer's~\cite{ES} (Section~\ref{sec::model}). Section~\ref{sec::upperbound} shows a theoretical bound on the attacker's revenue. In Section~\ref{sec::alg} we describe our algorithm to find optimal policies and values. In Section~\ref{sec::additional_results} we discuss more results, e.g., the optimal policies. Section~\ref{sec::delays} analyzes selfish mining in networks with delays. Section~\ref{sec::doublespend} discusses the interaction between selfish mining and double spending. We conclude with discussing related work (Section~\ref{sec::related}).

\section{Model}\label{sec::model}
We follow and extend the model of~\cite{ES}, to explicitly consider all
 actions available to the attacker at any given point in time.

We assume that the attacker controls a fraction $\alpha$ of the computational power in the network, and that the honest network thus has a $(1-\alpha)$ fraction. Communication of newly created blocks is modeled to be much faster than block creation, so no blocks are generated while others are being transmitted.\footnote{This is justified by Bitcoin's 10 minute block creation interval which is far greater than the propagation time of blocks in the network. This assumption is later removed when we consider networks with delay.}

Blocks are created in the network according to a Poisson process with rate $\lambda$. Every new block is generated by the attacker with probability $\alpha$, or by the honest network with probability $(1-\alpha)$. The honest network follows the Bitcoin protocol, and always builds its newest block on top of the longest known chain. Once an honest node adopts a block, it will discard it only if a strictly longer competing chain exists. Ties are thus handled by each node according to the order of arrival of blocks. Honest nodes immediately broadcast blocks that they create.

Blocks generally form a tree structure, as each block references a single predecessor (with the exception of the first block that is called the genesis block). Since the honest nodes adopt the longest chain, blocks generate rewards for their creator only if they are eventually part of the longest chain in the block tree (all blocks can be considered revealed eventually).

To model the communication capabilities of the attacker, we assume that whenever it learns that a block has been released by the network, it is able to transmit an alternative block which will arrive \emph{first} at nodes that possess a fraction $\gamma$ of the computational power of the honest network (the attacker must have prepared this block in advance in order to be able to deliver it quickly enough). Thus, if the network is currently propagating a block of height $h$, and the attacker has a competing block of the same height, it is able to get $\gamma\cdot (1-\alpha)$ of the computational power (owned by honest nodes) to adopt this block.

The attacker does not necessarily follow the Bitcoin protocol. Rather, at any given time $t$, it may choose to invest computational power in creating blocks that extend any existing block in history, and may withhold blocks it has created for any amount of time. A general selfish mining strategy dictates, therefore, two key behaviours: which block the attacker attempts to extend at any time $t$, and which blocks are released at any given time.
However, given that all block creation events are driven by memoryless processes and that broadcast is modeled as instantaneous, any rational decision made by the attacker may only change upon the creation of a new block. The mere passage of time without block creation does not otherwise alter the expected gains from future outcomes.\footnote{See Section~\ref{sec::delays} for the implication of delayed broadcasting.} Accordingly, we model the entire decision problem faced by an attacker using a discrete-time process in which each time step corresponds to the creation of a block. The attacker is thus asked to decide on a course of action immediately after the creation of each block, and this action is pursued until the next event occurs.

Instead of directly modeling the primitive actions of block extension and publication on general block trees, we can limit our focus to ``reasonable'' strategies where the attacker maintains a \emph{single} secret branch of blocks that diverged from the network's chain at some point. (We show that this limitation is warranted and that this limited strategy space still generates optimal attacks in Appendix~\ref{appendix::model}). Blocks before that point are agreed upon by all participants.
Accordingly, we must only keep track of blocks that are after the fork, and of the accumulated reward up to the fork.
We denote by $\sa$ the number of blocks that have been built by the attacker after the latest fork, and by $\sh$ the number of those built by honest nodes.

Formally, if all other participants are following the standard protocol, the attacker faces a single-player decision problem of the form $M:=\langle S,A,P,R\rangle$, where $S$ is the state space, $A$ the action space, $P$ a stochastic transition matrix that describes the probability of transitioning between states, and $R$ the reward matrix.
Though similar in structure, we do not regard $M$ as an MDP, since the objective function is nonlinear: The player aims to maximize its share of the accepted blocks, rather than the absolute number of its own accepted ones; its goal is to have a greater return-on-investment than its counterparts.\footnote{Another possible motivation for this is the re-targeting mechanism in Bitcoin. When the block creation rate in the network is constant, the adaptive re-targeting implies that the attacker will also increase its absolute payoff, in the long run.}

\mysubsub{Actions}
We begin with the description of the action space $A$, which will motivate the nontrivial construction of the state space.
\begin{itemize}
\item[$\bullet$] {\bf Adopt}. The action $\ad$ is always feasible, and represents the attacker's acceptance of the honest network's chain. The $\sa$ blocks in the attacker's current chain are discarded.
\item[$\bullet$] {\bf Override}. The action $\ov$ represents the publication of the attacker's blocks, and is feasible whenever $\sa>\sh$.
\item[$\bullet$] {\bf Match}. This action represents the case where the most recent block was built by the honest network, and the attacker now publishes a conflicting block of the same height. This action is not always feasible (the attacker must have a block prepared in advance to execute such a race). The state-space explicitly encodes the feasibility status of this action (see below).
\item[$\bullet$] {\bf Wait}. Lastly, the $\wa$ action, which is always feasible, implies that the attacker does not publish new blocks, but keeps working on its branch until a new block is built.
\end{itemize}

\mysubsub{State Space} The state space, denoted $S$, is defined by 3-tuples of the form $(\sa,\sh,fork)$. The first two entries represent the lengths of the attacker's chain and the honest network's chain, built after the latest fork (that is, above the most recent block accepted by all). The field $fork$ obtains three values, dubbed $irrelevant$, $relevant$ and $active$. State of the form $(\sa,\sh,relevant)$ means that the previous state was of the form $(\sa,\sh-1,\cdot)$; this implies that if $\sa\ge\sh$, \ma is feasible. Conversely, $(\sa,\sh,irrelevant)$ denotes the case where the previous state was $(\sa-1,\sh,\cdot)$, rendering \ma now ineffective, as all honest nodes received already the $\sh$'th block. The third label, $active$, represents the case where the honest network is already split, due to a previous \ma action; this information affects the transition to the next state, as described below.
\emph{We will refer to states as $(\sa,\sh)$ or $(\sa,\sh,\cdot)$, in contexts where the $fork$ label plays no effective role.}

\mysubsub{Transition and Reward Matrices}
In order to keep the time averaging of rewards in scale, every state transition corresponds to the creation of a new block. The initial state $X_0$ is $(1,0,irrelevant)$ w.p. $\alpha$ or $(0,1,irrelevant)$ w.p. $(1-\alpha)$.
Rewards are given as elements in $\mathbb{N}^2$, where the first entry represents blocks of the attacker that have been accepted by all parties, and the second one, similarly, for those of the honest network.

The transition matrix $P$ and reward matrix $R$ are succinctly described in Table~\ref{table::PandR}. Largely, an \ad action ``resets'' the game, hence the state following it has the same distribution as $X_0$; its immediate reward is $\sh$ in the coordinate corresponding to the honest network. An \ov reduces the attacker's secret chain by $\sh+1$ blocks, which it publishes, and which the honest network accepts. This bestows a reward of $\sh+1$ blocks to the attacker. The state following a \ma action depends on whether the next block is created by the attacker ($\alpha$), by honest nodes working on their branch of the chain ($(1-\gamma)\cdot(1-\alpha)$), or by an honest node which accepted the sub-chain that the attacker published ($\gamma\cdot(1-\alpha)$). In the latter case, the attacker has effectively overridden the honest network's previous chain, and is awarded $\sh$ accordingly.

\begin{table}[h]
\caption{A description of the transition and reward matrices $P$ and $R$ in the decision problem $M$. The third column contains the probability of transiting from the state specified in the left-most column, under the action specified therein, to the state on the second one. The corresponding two-dimensional reward (the reward of the attacker and that of the honest nodes) is specified on the right-most column.\label{table::PandR}}
\begin{tabular}{|c|c|c|c|}
\hline
\textbf{State $\times$ Action} & \textbf{State} & \textbf{Probability} & \textbf{Reward}\\
\hline
\multirow{2}{*}{$(\sa,\sh,\cdot),adopt$} & $(1,0,irrelevant)$ & $\alpha$ & \multirow{2}{*}{$(0,\sh)$}\\
\cline{2-3} & $(0,1,irrelevant)$ & $1-\alpha$ & \\
\hline
\multirow{2}{*}{$(\sa,\sh,\cdot),override^\dagger$} & $(\sa-\sh,0,irrelevant)$ & $\alpha$ & \multirow{2}{*}{$(\sh+1,0)$}\\
\cline{2-3} & $(\sa-\sh-1,1,relevant)$ & $1-\alpha$ & \\
\hline
\multirow{2}{*}{\makecell{$(\sa,\sh,irrelevant),wait$\ \\ $(\sa,\sh,relevant),wait$}} & $(\sa+1,\sh,irrelevant)$ & $\alpha$ & (0,0)\\
\cline{2-4} & $(\sa,\sh+1,relevant)$ & $1-\alpha$ & (0,0)\\
\hline
\multirow{3}{*}{ \makecell{$(\sa,\sh,active),wait$\ \\ $(\sa,\sh,relevant),match^\ddagger$}} & $(\sa+1,\sh,active)$ & $\alpha$ & (0,0)\\
\cline{2-4} & $(\sa-\sh,1,relevant)$  & $\gamma\cdot(1-\alpha)$ & $(\sh,0)$\\
\cline{2-4} & $(\sa,\sh+1,relevant)$ & $(1-\gamma)\cdot(1-\alpha)$ & (0,0)\\
\hline
%\multicolumn{1}{l}{\multirow{2}{*}{\begin{scriptsize}$^\dagger$when $\sa>\sh$\end{scriptsize} \\ \begin{scriptsize}$^\dagger$when $\sa>\sh$\end{scriptsize}}}
\multicolumn{1}{l}{\multirow{2}{*}{\begin{scriptsize}$^\dagger$feasible only when $\sa>\sh$\end{scriptsize}}}\\
    \multicolumn{1}{l}{\begin{scriptsize}$^\ddagger$feasible only when $\sa\ge \sh$\end{scriptsize}}\\
\end{tabular}
\end{table}

\mysubsub{Objective Function} As explained in the introduction, the attacker aims to maximize its relative revenue, rather than its absolute one as usual in MDPs. Let $\pi$ be a policy of the player; we will write $\pi(\sa,\sh,fork)$ for the action that $\pi$ dictates be taken at state $(\sa,\sh,fork)$. %(assuming it is deterministic). avivz: commented out that previous part. We don't want to discuss non-deterministic strategies (esp. since deterministic ones are just as good).
Denote by $X_t^\pi$ the state visited by time $t$ under $\pi$, and let $r(x,y,\pi)=(r^1(x,y,\pi),r^2(x,y,\pi))$ be the immediate reward from transiting from state $x$ to state $y$, under the action dictated by $\pi$. %(or the expected reward, if $\pi$ is non-deterministic).
$X^\pi_t$ will denote the $t$'th state that was visited. We will abbreviate $r_t(X^\pi_{t},X^\pi_{t+1},\pi)$ and write simply $r_t(\pi)$ or even $r_t$, when context is clear.
The objective function of the player is its \emph{relative} payoff, defined by
\begin{align}\label{eq::revenue_definition} &\rev:=\E \left[\liminf\limits_{T\rightarrow\infty} \frac{\sum_{t=1}^Tr^1_t(\pi)}{\sum_{t=1}^T\left(r^1_t(\pi)+r^2_t(\pi)\right)}\right].
\end{align}

We will specify the parameters of $\rev$ depending on the context (e.g., $\rev(\pi,\alpha,\gamma)$, $\rev(\pi)$, $\rev(\alpha)$), and will occasionally denote the value of $\rev$ by $\rho$. In addition, for full definiteness of \rev, we rule out pathological behaviours in which the attacker waits forever---formally, the expected time for the next non-null action of the attacker must be finite.%\footnote{In addition, in order for \rev{} to be well defined, we interpret $0/0=0$, in case the player only waits the entire game and no rewards are given.}
%
%\subsection{Truncated Policies}
%For practical measures, it is sometime convenient to look at finite setups. In particular, for any policy $\pi$ one could consider its $T$-truncated version, defined as:
%\commentout{
%\begin{align} %\label{truncated_policies}
%\pi^T\left(\left(\sa,\sh,fork\right)\right) := \left\{ \begin{array}{ccc}
%\ad \quad \quad &  \sa\ge T>\sh    \\
%\ov \quad \quad &  \sh\ge T> \sa    \\
%\pi(\sa,\sh,fork) \quad \quad & \text{otherwise}
%\end{array} \right\}.
%\end{align}
%}
%\begin{align} \label{truncated_policies}
%\pi^T\left(\sa,\sh,fork\right) := \left\{ \begin{array}{ccc}
%\ad \quad \quad & \sa+\sh\ge T,\:\sa>\sh    \\
%\ov \quad \quad & \sa+\sh\ge T,\:\sa\le \sh    \\
%\pi(\sa,\sh,fork) \quad \quad & \text{otherwise}
%\end{array} \right\}.
%\end{align}

\mysubsub{Honset Mining and \sm}
We now define two policies of prime interest to this paper. Honest mining is the unique policy which adheres to the protocol at every state. It is defined by
\begin{equation}
\label{eq::HONESTdef}
\hon\left(\sa,\sh,\cdot\right) = \left\{ \begin{array}{ccc}
\ad \quad \quad & \sh>\sa\\
\ov \quad \quad & \sa>\sh
\end{array} \right\},
\end{equation}
and \wa otherwise.
Notice that under our model, $\rev(\hon,\alpha,\gamma)=\alpha$ for all $\gamma$.\footnote{Indeed, in networks without delay, \hon is equivalent to the policy
$\left\{ \begin{array}{ccc}
\ad \quad & \text{if}~(\sa,\sh)=(0,1) ~;~
\ov \quad & \text{if}~(\sa,\sh)=(1,0)
\end{array} \right\}$, as these are the only reachable states. Delays allow other states to be reached, and will be covered in Section~\ref{sec::delays}.}
Eyal and Sirer's selfish mining strategy, \sm, can be defined as
\begin{equation}
\label{eq::ESdef}
\sm\left(\sa,\sh,\cdot\right) := \left\{ \begin{array}{ccc}
\ad \quad \quad & \sh>\sa\\
\ma \quad \quad & \sh=\sa=1\\
\ov \quad \quad & \sh=\sa-1\ge1\\
\wa \quad \quad & \text{otherwise}
\end{array} \right\}.
\end{equation}

\mysubsub{Profit threshold}
%One particular objective of this paper is the threshold of profitability of attacks, as mentioned in the introduction.
Keeping the attacker's connectivity capabilities ($\gamma$) fixed, we are interested in the minimal $\alpha$ for which employing dishonest mining strategies becomes profitable.
We define the profit threshold by:
 \begin{equation}
\hat\alpha(\gamma):=\inf_{\alpha}\left\{\exists\pi\in A :\:\rev(\pi,\alpha,\gamma)>\rev(\hon,\alpha,\gamma)\right\}.\label{eq::threshold_def}
\end{equation}

%The profit threshold under E\&S's suggested protocol will be denoted $\hat\alpha(\gamma)_{50\text{-}50}$.

\section{A Simple Upper Bound}\label{sec::upperbound}
The mechanism implied by the longest-chain rule leads to an immediate bound on the attacker's relative revenue. % (this bound was assumed also in~\cite{LEAR}).
Intuitively, we observe that the attacker cannot do better than utilizing every block it creates to override one block of the honest network.
The implied bound is provided here merely for general insight---it is usually far from the actual maximal revenue.
\begin{proposition}\label{prop::upperbound}
For any $\pi$, $\rev(\pi,\alpha,\gamma)\le\frac{\alpha}{1-\alpha}$. Moreover, this bound is tight, and achieved when $\gamma=1$.
\end{proposition}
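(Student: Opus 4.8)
The plan is to prove the inequality by a block-conservation (accounting) argument, and to obtain tightness by exhibiting, for $\gamma=1$, a single explicit matching policy whose revenue equals $\frac{\alpha}{1-\alpha}$. First I would fix an arbitrary $\pi$, run the process for $T$ steps, and let $n_1(T),n_2(T)$ count those steps whose new block was created by the attacker, resp.\ by the honest network, so that $n_1(T)+n_2(T)=T$ and, since each block is attacker-generated independently with probability $\alpha$, the strong law gives $n_1(T)/T\to\alpha$ and $n_2(T)/T\to 1-\alpha$ almost surely. Writing $A_T:=\sum_{t\le T}r^1_t$ and $H_T:=\sum_{t\le T}r^2_t$, I would first bound the numerator: every unit of $A_T$ is an attacker-mined block that entered the common chain, and distinct reward units are distinct blocks (an \ov from $(\sa,\sh)$ pays $\sh+1\le\sa$ already-mined secret blocks, a winning \ma pays $\sh\le\sa$ of them), whence $A_T\le n_1(T)$.

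The crux is the matching lower bound on the denominator. I would classify each honest-created block as \emph{accepted} (counted in $H_T$), \emph{discarded}, or still \emph{in limbo}, and read off from Table~\ref{table::PandR} that honest blocks are discarded only by \ov and by a winning \ma, each of which increases the attacker's accepted reward by at least the number of honest blocks it orphans ($\sh+1\ge\sh$ for \ov and $\sh\ge\sh$ for the $\gamma(1-\alpha)$ branch of \ma). Summing this per-step invariant gives $(\text{honest discarded by }T)\le A_T$. Since honest created $=$ honest accepted $+$ honest discarded $+$ honest in limbo, and the in-limbo count equals the current $\sh$, which has finite expectation by the finite-expected-return-time assumption, I get $A_T+H_T\ge H_T+(\text{honest discarded})=n_2(T)-c(T)$ with $c(T)=o(T)$ almost surely. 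Combining with $A_T\le n_1(T)$ yields, for every $T$, $\frac{A_T}{A_T+H_T}\le\frac{n_1(T)}{n_2(T)-c(T)}$; taking $\liminf_T$ (using that $a_T\le b_T$ forces $\liminf a_T\le\lim b_T$ when $b_T$ converges) the right-hand side tends to $\frac{\alpha}{1-\alpha}$, and taking expectations gives $\rev(\pi,\alpha,\gamma)\le\frac{\alpha}{1-\alpha}$.

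For tightness I would set $\gamma=1$, so a \ma never fails: whenever the attacker ties the honest height, the entire honest network switches to its block. Consider the policy that plays \wa whenever the attacker is weakly ahead, plays \ma whenever the honest network has just extended and $\sa\ge\sh$, and plays \ad only from states $(0,\sh)$. This policy never discards its own blocks (it adopts only when empty-handed), so asymptotically $A_T=n_1(T)-o(T)$; and since it only ever cancels via \ma, every discarded honest block is matched one-for-one by an accepted attacker block, giving $(\text{honest discarded})=A_T$ and hence $H_T=n_2(T)-A_T-o(T)$. The lead $\sa-\sh$ is a negative-drift random walk, hence positive recurrent, which both justifies the $o(T)$ terms and furnishes a regenerative structure on which the ratio converges almost surely; plugging in gives $\frac{A_T}{A_T+H_T}\to\frac{n_1(T)}{n_2(T)}\to\frac{\alpha}{1-\alpha}$, so the bound is attained.

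The hard part will be the denominator bound: one must argue that discarded honest blocks never outnumber accepted attacker blocks (a global invariant extracted from the reward structure of \ov and \ma) and that the transient in-limbo blocks contribute only an $o(T)$ term, which is exactly where the finite-return-time hypothesis is needed; the passage from the per-$T$ inequality to $\liminf$ also requires the one-sided comparison rather than a limit of ratios. For tightness the only delicate point is confirming positive recurrence so the regenerative averaging is valid, and noting that $\gamma=1$ is genuinely essential — for $\gamma<1$ a constant fraction of matches fail, the one-for-one cancellation breaks, and the bound is no longer met.
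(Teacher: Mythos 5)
Your proof is correct and follows essentially the same route as the paper's: an accounting argument showing that every discarded honest block is matched by an attacker block (so the honest network keeps at least $l_T-k_T$ blocks while the attacker keeps at most $k_T$), followed by the SLLN on block-creation counts, and an explicit $\gamma=1$ policy for tightness. The only cosmetic differences are that you charge discarded honest blocks against the attacker's \emph{accepted} reward $A_T$ rather than against its created blocks $k_T$, you treat the in-limbo term explicitly (the paper glosses over it), and you use a perpetually-matching policy for tightness where the paper simply invokes \sm.
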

See Appendix~\ref{appendix::upperbound} for the proof.

\section{Solving for the Optimal Policy}\label{sec::alg}
Finding an optimal policy is not a trivial task, as the objective function~(\ref{eq::revenue_definition}) is nonlinear, and depends on the entire history of the game. To overcome this we introduce the following method. We assume first that the optimal value of the objective function is $\rho$, then construct an infinite un-discounted average reward MDP (with ``standard'' linear rewards), compute its optimal policy (using standard MDP solution techniques), and if the reward of this policy is zero then it is optimal also in the original decision problem $M$. We elaborate on this approach below.

\subsection{Method}
For any $\rho\in [0,1]$, define the transformation $w\rh:\mathbb{N}^2\rightarrow\mathbb{Z}$ by $w\rh(x,y):=(1-\rho)\cdot x-\rho\cdot y$. Define the MDP $\mdp\rh:=\langle S,A,P,w\rh(R)\rangle $; it shares the same state space, actions, and transition matrix as $\mdp$, while $\mdp$'s immediate rewards matrix is transformed according to $w\rh$. For any admissible policy $\pi$ denote by
$v^\pi_{\rho}$ the expected mean revenue under $\pi$, namely,
\begin{equation}
\label{eq::value_rho}
v^\pi_{\rho}=\E\left[\liminf\limits_{T\rightarrow\infty}\frac1T\sum_{t=1}^T w\rh(r_t(\pi))\right],
\end{equation}
and by
\begin{equation}
\label{eq::value_rho_max}
v^*_{\rho}=\max_{\pi\in A}\left\{v^\pi\rh\right\}
\end{equation}
the value of $\mdp\rh$.\footnote{The equivalence of this formalization of the value function and alternatives in which the order of expectation and limit is reversed is discussed in~\cite{BIERTH}.} Our solution method is based on the following proposition: %observation that if for some $\rho$ and policy $\pi$, $v^{\pi}_{\rho}=0$, then $\rev(\pi)=\rho$. Moreover:
\begin{proposition}\label{prop::method}
\begin{enumerate}
\item
If for some $\rho\in[0,1]$, $v^*_{\rho}=0$, then any policy $\pi^*$ obtaining this value (thus maximizing $v^{\pi}\rh$) also maximizes $\rev$, and $\rho=\rev(\pi^*\rh)$.
\item $v^*\rh$ is monotonically decreasing in $\rho$.
\end{enumerate}
\end{proposition}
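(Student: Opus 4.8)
The plan is to exploit that $w_\rho$ linearises the fractional objective $\rev$, converting the search for a relative-revenue-optimal policy into the parametric family of average-reward problems $M_\rho$; this is a Dinkelbach-style argument for fractional optimisation. Writing $A_T=\sum_{t=1}^T r^1_t$, $B_T=\sum_{t=1}^T r^2_t$ and $C_T=A_T+B_T$, the algebraic identity I would start from is $\sum_{t=1}^T w_\rho(r_t)=(1-\rho)A_T-\rho B_T=A_T-\rho\,C_T$, so that $\tfrac{A_T}{C_T}-\rho=\tfrac{1}{C_T}\sum_{t=1}^T w_\rho(r_t)$. Thus the \emph{sign} of the long-run reward of $M_\rho$ reports whether the relative revenue is above or below $\rho$, and both parts of the proposition are sign/monotonicity statements about this identity.

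For Part 1, I would first use the regenerative structure of the process: the resetting actions $\ad$ and $\ov$, together with the standing assumption that the expected time to the next non-null action is finite, make the chain induced by any stationary policy positive recurrent with finite-mean regeneration cycles. By standard average-reward MDP theory $v^*_\rho$ is attained by such a stationary policy, and the renewal--reward theorem then gives almost-sure convergence $A_T/T\to\bar a$, $C_T/T\to\bar c=\bar a+\bar b>0$ (positivity because non-null actions occur at a positive rate and credit blocks). Hence each $\liminf$ in~(\ref{eq::revenue_definition}) and~(\ref{eq::value_rho}) is a genuine limit, the outer expectations are trivial, and for every such policy $v^\pi_\rho=\bar a-\rho\bar c$ while $\rev(\pi)=\bar a/\bar c$. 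Now if $v^*_\rho=0$ is attained by $\pi^*$, then $\bar a^*-\rho\,\bar c^*=0$ yields $\rev(\pi^*)=\bar a^*/\bar c^*=\rho$, proving $\rho=\rev(\pi^*)$. For optimality I would argue that any competing policy $\pi$ satisfies $\rev(\pi)\le\rho$: for stationary $\pi$ this is immediate, since $v^\pi_\rho\le v^*_\rho=0$ forces $\bar a_\pi-\rho\,\bar c_\pi\le0$, i.e.\ $\rev(\pi)=\bar a_\pi/\bar c_\pi\le\rho$; for a general admissible $\pi$ I would instead invoke the bias function $h$ of the $v^*_\rho=0$ optimality equation to obtain the pathwise telescoping bound $\sum_{t=1}^T w_\rho(r_t)\le h(X_0)-h(X_T)+(\text{zero-mean term})$, divide by $C_T$, and use $h(X_T)=o(C_T)$ to conclude $\liminf_T(\tfrac{A_T}{C_T}-\rho)\le0$.

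Part 2 is the purely monotone half. Since rewards lie in $\mathbb{N}^2$ each coordinate is nonnegative, so for fixed $(x,y)$ the map $\rho\mapsto w_\rho(x,y)=x-\rho(x+y)$ is affine with slope $-(x+y)\le0$, hence non-increasing in $\rho$. Therefore for $\rho_1<\rho_2$ we have $w_{\rho_1}(r_t)\ge w_{\rho_2}(r_t)$ on every sample path and every $t$; summing, dividing by $T$, and applying $\liminf$ and $\E$---all monotone operations---gives $v^\pi_{\rho_1}\ge v^\pi_{\rho_2}$ for each fixed $\pi$, and taking the maximum over $\pi$ preserves the inequality, so $v^*_{\rho_1}\ge v^*_{\rho_2}$. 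The decrease is in fact strict once rewards accrue at a positive rate ($\bar c>0$, guaranteed by the finite-cycle assumption), since then the slope term contributes strictly.

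The only genuinely delicate point---and the step I expect to be the main obstacle---is reconciling the two normalisations: $\rev$ averages over the number of \emph{accepted} blocks $C_T$, whereas $v_\rho$ averages over the number of \emph{steps} $T$. Because $C_T\le T$, the pathwise inequalities relating $\tfrac{1}{C_T}\sum w_\rho(r_t)$ and $\tfrac{1}{T}\sum w_\rho(r_t)$ flip direction with the sign of the partial sum, so the two $\liminf$s cannot be compared term by term. This is exactly why the argument must route through the convergence $C_T/T\to\bar c>0$ (or, equivalently, through the bias-function bound above), and verifying that the resulting boundary term is negligible is where the finiteness of the expected return time does the real work.
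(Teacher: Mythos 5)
Your proposal is correct and follows essentially the same route as the paper: the paper does not prove Proposition~\ref{prop::method} verbatim but establishes a variation of it in Appendix~\ref{appendix::correctness}, using exactly your ingredients --- the regenerative/positive-recurrent structure and SLLN (Lemma~\ref{lem::strong_law_applies}) to turn the $\liminf$s into genuine limits and write $v^\pi\rh=\bar a-\rho\bar c$ with $\rev(\pi)=\bar a/\bar c$ and $\bar c\ge 1/2$ (Lemma~\ref{lem::slope_of_rev}), and the pointwise strict decrease of $w\rh$ for the monotonicity of $v^*\rh$ (proof of Proposition~\ref{prop::correctness}). Your closing remark about normalising by $C_T$ versus $T$ is precisely the point the paper handles via the lower bound $\E[\lim_T \frac1T\sum_t(r^1_t+r^2_t)]\ge 1/2$.
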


Following these observations we can utilize the family $\mdp\rh$ to obtain an optimal policy: We perform a simple search for a $\rho$ such that the optimal solution of $\mdp\rh$ has a value of 0. Since $v^*\rh$ is monotonically decreasing, this search can be done efficiently, using binary search. In practice, our algorithm relies on a variation of Proposition~\ref{prop::method}, which will be proven formally in Appendix~\ref{appendix::correctness}.

Due to the fact that the search domain is continuous, practically, one would need to halt the search at a point that is sufficiently close to the actual value, but never exact. Moreover, in practice, MDP solvers can solve only finite state space MDPs, and even then only to a limited degree of accuracy.
Our algorithm copes with these computational limitations by using finite MDPs as bounds to the original problem, and by analyzing the potential error that is due to inexact solutions.

\subsection{Translation to Finite MDPs}
We now introduce two families of MDPs, closely related to the family $\mdp\rh$: Fix some $T\in\mathbb{N}$. We define an \emph{under-paying} MDP, $\mdp\rh^T$, which differs from $\mdp\rh$ only in states where $\max\left\{\sa,\sh\right\}=T$, in which it only allows only for the \ad action. We denote this modified action space by $A^T$.
Clearly, the player's value in $\mdp^T\rh$ lower bounds that in $\mdp\rh$, since in the latter the attacker might be able to do better by not adopting in the truncating states. Consequently, this MDP can only be used to upper bound the threshold (in a way described below).

To complete the picture we need to bound the optimal value from above, and we do so by constructing an \emph{over-paying} MDP, $N^T\rh$. This MDP shares the same constraint as $\mdp^T\rh$, yet it compensates the attacker in the states where $\max\left\{\sa,\sh\right\}=T$, by granting it a reward greater than what it could have gotten in the un-truncated process:
When $T=\sa\ge\sh$, the attacker is awarded
\begin{equation}
(1-\rho)\cdot\frac{\alpha\cdot(1-\alpha)}{\left(1-2\cdot\alpha\right)^2}+\frac12\cdot\left(\frac{\sa-\sh}{1-2\cdot\alpha}+\sa+\sh\right).
\end{equation}
On the other hand, when $T=\sh\ge\sa$, it is awarded
\begin{align*}
&\left(1-\left(\frac{\alpha}{1-\alpha}\right)^{\sh-\sa}\right)\cdot\left(-\rho\cdot\sh\right)+\left(\frac{\alpha}{1-\alpha}\right)^{\sh-\sa}\cdot(1-\rho)\cdot\left(\frac{\alpha\cdot(1-\alpha)}{\left(1-2\cdot\alpha\right)^2}+\frac{\sh-\sa}{1-2\cdot\alpha}\right).
\end{align*}

Denote by ${v^T\rh}^*$ and ${u^T\rh}^*$ the average-sum optimal values of the under-paying $\mdp^{T}\rh$ and the over-paying $N^{T}\rh$, respectively (i.e., the expected liminf of the average value, for the best policy in $A^T$, similar to~(\ref{eq::value_rho})-(\ref{eq::value_rho_max})). The following proposition formalizes the bounds provided by the over-paying and under-paying MDPs:

\begin{proposition}\label{prop::trunerr}
For any $T\in\mathbb{N}$, if ${v\rh}^*\ge0$ then ${u^T\rh}^*\ge {v\rh}^*\ge  {v^T\rh}^*$. Moreover, these bounds are tight: $\lim\limits_{T\rightarrow\infty}{u^T\rh}^*-{v^T\rh}^*=0$.
\end{proposition}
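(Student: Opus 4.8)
The plan is to treat the two inequalities of the sandwich separately and then bound their gap. For the easy direction $v\rh^{*}\ge {v^{T}\rh}^{*}$ I would argue by domination on the feasible set. The under-paying MDP $\mdp^{T}\rh$ is obtained from $\mdp\rh$ solely by \emph{removing} actions---forcing \ad at every truncation state $\max\{\sa,\sh\}=T$---while the reward matrix $w\rh(R)$ is left untouched. Hence each admissible policy of $\mdp^{T}\rh$ is also admissible in $\mdp\rh$ and produces an identical reward stream, so the supremum of $v^{\pi}\rh$ over $A^{T}$ cannot exceed the supremum over $A$. This direction uses no hypothesis on the sign of $v\rh^{*}$.

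For the upper bound ${u^{T}\rh}^{*}\ge v\rh^{*}$ the key claim is that the boundary reward granted by $N^{T}\rh$ is, by construction, an \emph{over-estimate} of the best continuation value attainable from a truncation state in the un-truncated $\mdp\rh$. I would take a near-optimal admissible policy $\pi$ for $\mdp\rh$ and compare it to the policy $\pi'$ that mimics $\pi$ at interior states and adopts (as forced) at the boundary. Since every \ad returns the chain to the $X_{0}$-distribution, the reward of any admissible policy admits a renewal-reward representation---average reward equals expected $w\rh$-reward per \ad-cycle over expected cycle length---and truncation amounts to cutting a cycle short at its first visit to $\max\{\sa,\sh\}=T$. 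The two closed forms are precisely (upper bounds on) the expected remaining $w\rh$-reward of that cut-off portion: when $T=\sa\ge\sh$ they credit the attacker's lead through the expected overshoot of the walk $\sa-\sh$, and when $T=\sh\ge\sa$ they weight the gambler's-ruin catch-up probability $\left(\tfrac{\alpha}{1-\alpha}\right)^{\sh-\sa}$ against the forced loss $-\rho\sh$. The hypothesis $v\rh^{*}\ge 0$ (equivalently, $\rho$ at or below the optimal revenue, by Proposition~\ref{prop::method}) is exactly the regime in which these $(1-\rho)$-weighted expressions dominate the true continuation; granting the bonus in place of the genuine continuation can then only raise the value, yielding ${u^{T}\rh}^{*}\ge v^{\pi}\rh$, and letting $\pi$ approach optimality gives ${u^{T}\rh}^{*}\ge v\rh^{*}$.

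For tightness I would exploit that $\mdp^{T}\rh$ and $N^{T}\rh$ have \emph{identical} dynamics and interior rewards and differ only by the non-negative boundary surplus $\Delta(\sa,\sh)$ equal to the bonus minus the forced-\ad reward $-\rho\sh$. Feeding an optimal policy of $N^{T}\rh$ into $\mdp^{T}\rh$ shows $0\le {u^{T}\rh}^{*}-{v^{T}\rh}^{*}$ is at most the long-run rate at which $\Delta$ is accrued at truncation states, under that single policy. It then suffices to bound this rate \emph{uniformly} over policies: because $\alpha<1/2$ the lead walk $\sa-\sh$ has strictly negative drift, so reaching the face $\sa=T$ requires an exponentially (in $T$) rare favourable excursion, while the face $\sh=T$ is reached with a large deficit $\sh-\sa$ on which $\Delta$ itself carries the exponentially small catch-up factor $\left(\tfrac{\alpha}{1-\alpha}\right)^{\sh-\sa}$. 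In either mode the rate is exponentially small in $T$, whereas $\Delta$ grows only linearly, so the product tends to $0$ and $\lim_{T\to\infty}\left({u^{T}\rh}^{*}-{v^{T}\rh}^{*}\right)=0$.

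The step I expect to be the main obstacle is the upper bound: one must verify that the explicit closed forms dominate the supremum over \emph{all} continuation policies, not just a convenient one, and that this domination is compatible with the $\liminf$-average objective and the renewal accounting (in which forcing \ad also shortens the cycle, so both numerator and denominator move). The uniform exponential-decay estimate then makes tightness comparatively mechanical, and in particular rules out the concern that the positive surplus $\Delta$ might tempt the optimizer of $N^{T}\rh$ to steer toward the boundary---boundary visits remain exponentially rare for every admissible policy.
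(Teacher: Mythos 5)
Your overall architecture matches the paper's: the lower bound ${v\rh}^*\ge{v^T\rh}^*$ by feasible-set inclusion, the upper bound via a renewal-reward (per-\ad-cycle) accounting in which the boundary bonus replaces the tail of the cycle, and tightness by showing the surplus is accrued at an exponentially small rate while growing at most linearly. However, the step you yourself flag as ``the main obstacle''---verifying that the two closed forms dominate the best continuation value over \emph{all} policies---is precisely the mathematical content of the proposition, and your proposal does not supply it. The paper fills it with an oracle/relaxation argument: from a state in $\lt=\left\{\sa\ge\sh\right\}$, grant the attacker an oracle announcing the last time the process will visit $\lt$ before the next \ad, together with a free \ma that succeeds with probability $1$; the optimal response is then to wait for that last visit (necessarily on the diagonal) and \ma there, so the continuation reward is at most $(1-\rho)\cdot\E\left[last(\sa,\sh)\right]$, and a separate random-walk lemma (last exit time from the origin, Wald's equation, gambler's-ruin return times) evaluates $\E\left[last(\sa_0,\sh_0)\right]$ to exactly the expression used in $N^T\rh$. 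The $\sh=T$ face is handled by a second oracle revealing whether the walk ever returns to the diagonal, combined with the return probability $\left(\alpha/(1-\alpha)\right)^{\sh-\sa}$ and the conditional hitting position $(\sh-\sa)/(1-2\alpha)$. Without some such argument the inequality ${u^T\rh}^*\ge{v\rh}^*$ is asserted rather than proved.

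A second, smaller issue: you locate the use of the hypothesis ${v\rh}^*\ge0$ in the wrong place. The domination of the continuation value by the bonus holds for every $\rho$; the oracle argument never consults the sign of the value. Where the sign is actually needed is in the ratio $v^\pi\rh=\bigl((1-\rho)\cdot\E\left[R^{1,1}\right]-\rho\cdot\E\left[R^{2,1}\right]\bigr)/\E[\tau_1]$: truncating at the boundary weakly increases the per-cycle numerator but also weakly shortens the cycle, and ``larger numerator over smaller denominator'' increases the quotient only when the numerator is non-negative. You do notice that ``both numerator and denominator move,'' so you are circling the right difficulty, but the resolution is the sign condition itself, not a property of the closed forms. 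Your tightness argument is essentially the paper's and is fine at the same level of rigor, including the necessary case split between boundary states with large $\sa$ (exponentially rare to reach) and those with large $\sh-\sa$ (where the surplus itself carries the exponentially small catch-up factor).
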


The proof is differed to the appendix. Having introduced these MDP families, we are now ready to present an algorithm which utilizes them to obtain upper and lower bounds on the attacker's profit.

\subsection{Algorithm}
\begin{algorithm}\label{alg}\ \\
Input: $\alpha$ and $\gamma$, a truncation parameter $T_0\in\mathbb{N}$, and error parameters $0<\epsilon<8\cdot\alpha$, $0<\epsilon'<1$  %\ \\Output: policy $\pi$ and $\rho\in[0,1]$ \emph{s.t.} $\big|\rho-\rev(\pi)\big|<\epsilon$ and $\big|\rho-\max_{\pi'\in A}\left\{\rev(\pi')\right\}\big|<\epsilon$
\begin{enumerate}
%\item $T_0\leftarrow \max\left\{T_1(\epsilon/2),T_2(\epsilon/2),T_3(\epsilon/2)\right\}$ \label{line_truncation}
\item $low\leftarrow0$, $high\leftarrow1$\label{line_initseg}
\item do
\item\quad $\rho \leftarrow (low+high)/2$\label{line_initrho}
\item\quad $\left(\pi, v\right) \:\leftarrow\: mdp\_solver(\mdp\rh^{T_0},\epsilon/8)$\label{line_mdp_solver}
\item\quad if $(v>0)$\label{line_binarys1}
\item\quad\quad $low\leftarrow\rho$ \label{line_binarys2}
\item\quad else
\item\quad\quad $high\leftarrow\rho$ \label{line_binarys4}
%\item while$\left(|v|\ge\epsilon/4\right)$ \label{line_while}
\item while$\left(high-low\ge\epsilon/8\right)$ \label{line_while}
\item $lower$-$bound\leftarrow (\rho-\epsilon)$ \label{line::output_lowbound}
\item $lower$-$bound$-$policy\leftarrow \pi$\label{line::output_lowboundpolicy}
\item $\rho'\leftarrow \max\left\{low-\epsilon/4,0\right\}$ \label{line_low_rho_assign}
\item $\left(\pi, u\right) \:\leftarrow\: mdp\_solver(N_{\rho'}^{T_0},\epsilon')$\label{line_mdp_solver_2}
\item $upper$-$bound\leftarrow (\rho'+2\cdot(u+\epsilon'))$\label{line::output_upbound}
\end{enumerate}
\end{algorithm}

%The algorithm first chooses the truncation $T_0$ according to $\epsilon$ (line~\ref{line_truncation}); importantly, $T\in O(\ln(1/\epsilon))$.
The algorithm initializes the search segment to be $[0,1]$ (line~\ref{line_initseg}) and begins a binary search: $\rho$ is assigned the middle point of the search segment (line~\ref{line_initrho}), and the algorithm outputs an $\epsilon/8$-optimal policy of $\mdp\rh^{T_0}$ and its value (line~\ref{line_mdp_solver}). The loop halts if the size of the search segment is smaller then $\epsilon/8$. Otherwise, it restricts the search to the larger half of the segment, if the value is positive (line~\ref{line_binarys2}), or to the lower half, in case it is negative~(line~\ref{line_binarys4}). This essentially represents a binary search for an approximate-root of ${v^{T_0}\rh}$, which is a monotonically decreasing function of $\rho$. The algorithm outputs $(\rho-\epsilon)$ as a lower bound on the player's relative revenue, and $\pi$ as an $\epsilon$-optimal policy. These assertions are formalized in the proposition below:

\begin{proposition}\label{prop::correctness}
For any $T_0\in\mathbb{N}$ and $\epsilon>0$, Algorithm~\ref{alg} halts, and its output $(\rho,\pi)$ satisfies: $\big|\rho-\rev(\pi)\big|<\epsilon$ and $\big|\rho-\max_{\pi'\in A^{T_0}}\left\{\rev(\pi')\right\}\big|<\epsilon$.
\end{proposition}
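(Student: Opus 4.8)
The plan is to reduce the whole statement to the behaviour of the single scalar function $\rho\mapsto{v^{T_0}\rh}^*$, the optimal average reward of the finite under-paying MDP $\mdp\rh^{T_0}$, which I abbreviate $g(\rho)$. By the analogue of Proposition~\ref{prop::method} applied to the truncated action space $A^{T_0}$, $g$ is monotonically decreasing, and its root $\rho^{T_0}$ satisfies $g(\rho^{T_0})=0$ together with $\rho^{T_0}=\max_{\pi'\in A^{T_0}}\rev(\pi')$; so both assertions come down to showing that the final $\rho$ lands within $\epsilon$ of this root and that the returned policy's true relative revenue is within $\epsilon$ of $\rho$. Halting is immediate: each pass replaces $[low,high]$ by one of its halves, so $high-low$ is divided by two every iteration and drops below $\epsilon/8$ after finitely many passes, while the single call to $mdp\_solver$ on the finite MDP $\mdp\rh^{T_0}$ terminates by assumption.

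Next I would record the two quantitative ingredients that make the search meaningful. First, the \emph{invariant} that after every iteration $g(low)\ge-\epsilon/8$ and $g(high)\le\epsilon/8$: it holds initially since $w_0(x,y)=x\ge0$ forces $g(0)\ge0$ and $w_1(x,y)=-y\le0$ forces $g(1)\le0$, and inductively the solver returns a policy $\pi$ with value $v=v^\pi\rh$ obeying $g(\rho)-\epsilon/8\le v\le g(\rho)$, so $v>0$ gives $g(\rho)>-\epsilon/8$ (and we raise $low$) while $v\le0$ gives $g(\rho)\le\epsilon/8$ (and we lower $high$). Second, the \emph{conversion identity}: for a stationary policy $\pi'$ with well-defined long-run averages $\bar r^1,\bar r^2$ of the two reward coordinates, $v^{\pi'}\rh=(1-\rho)\bar r^1-\rho\,\bar r^2$ and $\rev(\pi')=\bar r^1/(\bar r^1+\bar r^2)$, hence $\rev(\pi')-\rho=v^{\pi'}\rh/(\bar r^1+\bar r^2)$. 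The decisive point is a two-sided bound on the reward rate $D(\pi'):=\bar r^1+\bar r^2$: each accepted block is a distinct created block so $D\le1$, and the honest network loses a block only when the attacker overrides or wins a match, each of which consumes one of the attacker's own blocks, giving $\bar r^2\ge(1-\alpha)-\bar r^1$ and thus $D\ge1-\alpha$. Since $\tfrac{d}{d\rho}v^{\pi'}\rh=-D(\pi')\in[-1,-(1-\alpha)]$, the maximum $g=\max_{\pi'}v^{\pi'}\rh$ is $1$-Lipschitz and strictly decreasing with slope at most $-(1-\alpha)$.

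With these in hand both bounds follow by elementary estimates at the output $\rho\in[low,high]$. Strict monotonicity with slope magnitude $\ge1-\alpha$ upgrades the invariant to a localization of the root: $g(low)\ge-\epsilon/8$ yields $low\le\rho^{T_0}+\tfrac{\epsilon}{8(1-\alpha)}$ and $g(high)\le\epsilon/8$ yields $high\ge\rho^{T_0}-\tfrac{\epsilon}{8(1-\alpha)}$, and with $high-low<\epsilon/8$ this gives $|\rho-\rho^{T_0}|<\tfrac{\epsilon}{8(1-\alpha)}+\tfrac{\epsilon}{8}<\epsilon$ for $\alpha<\tfrac12$, which is the second assertion. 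For the first, $1$-Lipschitzness gives $|g(\rho)|\le|\rho-\rho^{T_0}|$, so the returned $v=v^\pi\rh$ satisfies $|v|\le|g(\rho)|+\epsilon/8$, and the conversion identity gives $|\rev(\pi)-\rho|=|v|/D(\pi)\le|v|/(1-\alpha)<\epsilon$. The hypothesis $\epsilon<8\alpha$ and the splitting of the budget into $\epsilon/8$-sized pieces are exactly what keep the three error sources—binary-search tolerance, solver sub-optimality, and the $1-\alpha$ conversion factor—jointly below $\epsilon$.

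I expect the main obstacle to be the conversion step rather than the binary-search bookkeeping: one must (a) justify $\rev(\pi')-\rho=v^{\pi'}\rh/D(\pi')$ for the policies actually produced, i.e. restrict to stationary deterministic policies so that the $\liminf$ in~(\ref{eq::revenue_definition}) and~(\ref{eq::value_rho}) are genuine limits along the induced Markov chain and the transient start-up terms vanish, and (b) establish the uniform lower bound $D(\pi')\ge1-\alpha$ on the reward rate, which is simultaneously what makes $g$ non-degenerate (strictly decreasing) and what controls the linear-to-nonlinear conversion. Everything else is the routine analysis of a binary search for the root of a monotone Lipschitz function, carried out with the perturbed sign-tests supplied by the approximate solver.
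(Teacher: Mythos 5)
Your proof is correct and follows essentially the same route as the paper's: a binary search for the root of the monotone function $\rho\mapsto{v^{T_0}\rh}^*$ maintained by sign invariants at $low$ and $high$, combined with a conversion between the linear value $v^{\pi}\rh$ and the nonlinear $\rev(\pi)$ via a lower bound on the main-chain growth rate $\E\bigl[\lim_{T\to\infty}\frac1T\sum_{t=1}^T(r^1_t+r^2_t)\bigr]$. The only (cosmetic) differences are that you package monotonicity as a two-sided slope bound on the piecewise-affine $g$ and use the rate bound $1-\alpha$ obtained from the override-injection argument, where the paper uses the weaker constant $1/2$ (its Lemma~\ref{lem::slope_of_rev}) and two contradiction arguments in place of the Lipschitz estimates.
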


The second part of the algorithm (lines~\ref{line_low_rho_assign}-\ref{line::output_upbound}) computes an $\epsilon'$-optimal policy for the over-paying MDP $N^{T_0}_{\rho'}$, for $\rho'=(low-\epsilon/4)^+$ (using the value assigned last to $low$). If $u$ is the outputted value, the algorithm returns $\rho+2\cdot(u+\epsilon')$ as an upper bound to the player's revenue (line~\ref{line::output_upbound}).

\begin{proposition}\label{prop::trunerr2}
%If $v$ is the value returned by $mdp\_solver(\mdp^{T}\rh,\epsilon)$ and $v\ge\epsilon$, and if
If $u$ and $\rho'$ are the outcome of the computation in Algorithm~\ref{alg}, lines~\ref{line_low_rho_assign}-\ref{line_mdp_solver_2}, then $\rho'+2\cdot (u+\epsilon') > \max_{\pi'\in A}\left\{\rev(\pi')\right\}$.
\end{proposition}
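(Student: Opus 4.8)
The plan is to show that the algorithm's output $\rho'+2(u+\epsilon')$ dominates the true optimum $\rho^*:=\max_{\pi'\in A}\{\rev(\pi')\}$ of the full, untruncated problem. The backbone is the chain of inequalities
\[
u+\epsilon'\;\ge\;{u^{T_0}_{\rho'}}^*\;\ge\;v^*_{\rho'}\;\ge\;\tfrac12\,(\rho^*-\rho'),
\]
from which $2(u+\epsilon')\ge\rho^*-\rho'$, i.e. $\rho'+2(u+\epsilon')\ge\rho^*$, follows at once. The first inequality is the $\epsilon'$-accuracy guarantee of the solver on $N^{T_0}_{\rho'}$ (a reported policy value lower-bounds the optimum by at most $\epsilon'$). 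The second is Proposition~\ref{prop::trunerr} (the over-paying MDP bounds the true value from above), whose hypothesis is $v^*_{\rho'}\ge0$. The third is the crux, and is where the factor $2$ in the statement originates.

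First I would discharge the hypothesis by proving $\rho'\le\rho^*$, whence $v^*_{\rho'}\ge v^*_{\rho^*}=0$ by the monotonicity of $\rho\mapsto v^*_{\rho}$ (Proposition~\ref{prop::method}) together with the fact that $\rho^*$ is the root of $v^*_{\rho}$. Whenever the binary search executes $low\leftarrow\rho$ it has just found a policy of strictly positive value in $M^{T_0}_{\rho}$; since any policy's value lower-bounds the optimal truncated value ${v^{T_0}_{\rho}}^*$, this forces ${v^{T_0}_{low}}^*>0$, so $low$ lies strictly below the root $\rho^{T_0}=\max_{\pi'\in A^{T_0}}\{\rev(\pi')\}$ of the decreasing map $\rho\mapsto{v^{T_0}_{\rho}}^*$. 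Because truncation only restricts the attacker, $\rho^{T_0}\le\rho^*$, and since $\rho'=\max\{low-\epsilon/4,0\}\le low$, we obtain $\rho'\le low<\rho^{T_0}\le\rho^*$ (the degenerate case $low=0$ gives $\rho'=0\le\rho^*$ directly, and there $v^*_0\ge0$ trivially).

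The heart of the proof is the crux inequality $v^*_{\rho'}\ge\frac12(\rho^*-\rho')$, and this is the step I expect to be the main obstacle. Recall $w_{\rho'}(x,y)=(1-\rho')x-\rho'y$, so for a policy $\pi$ whose long-run per-step acceptance rates of attacker- and honest-blocks are $g^1$ and $g^2$, one has $v^\pi_{\rho'}=g^1-\rho'(g^1+g^2)=(g^1+g^2)\bigl(\rev(\pi)-\rho'\bigr)$, using $\rev(\pi)=g^1/(g^1+g^2)$. Taking $\pi$ near-optimal ($\rev(\pi)\to\rho^*$) and using $v^*_{\rho'}\ge v^\pi_{\rho'}$, it then suffices to prove the throughput bound $g^1+g^2\ge\frac12$. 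This follows from the key structural observation that at every reward-bearing transition, accepted blocks are at least orphaned blocks: an \ov accepts $\sh+1$ attacker blocks while orphaning $\sh$ honest ones; a winning \ma accepts $\sh$ while orphaning $\sh$; and an \ad accepts $\sh$ honest blocks while orphaning the $\sa$ secret ones. The only adverse case, an \ad with $\sa>\sh$, is dominated by \ov (which accepts at least as many blocks and retains the surplus), so the supremum $\rho^*$ is approached by policies that never adopt from ahead. For such policies accepted $\ge$ orphaned at every event; since there is exactly one block creation per step and every created block is eventually accepted or orphaned, summing over a regeneration cycle yields accepted $\ge$ created$/2$, hence $g^1+g^2\ge\frac12$. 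The two rigor points to watch are exactly this dominance reduction and the justification of passing from the $\liminf$-in-expectation definitions to per-cycle averages, which I would support via the regenerative (positive-recurrent) structure induced by \ad resets and the finite-expected-waiting-time assumption.

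Finally I would assemble the chain to get $\rho'+2(u+\epsilon')\ge\rho^*$ and then sharpen $\ge$ to $>$. The strict inequality is secured by the slack deliberately built into the construction: the additive $\epsilon'>0$ beyond what is needed to cancel the solver error, together with the strict over-payment of $N^{T_0}_{\rho'}$ at truncation states (so ${u^{T_0}_{\rho'}}^*>v^*_{\rho'}$ whenever the near-optimal policy reaches $\max\{\sa,\sh\}=T_0$ with positive frequency, which it does since $\rho'<\rho^*$ forces a genuinely active selfish optimum). A short check disposes of the degenerate regime where the optimum coincides with \hon. Combining these gives the desired strict bound $\rho'+2(u+\epsilon')>\max_{\pi'\in A}\{\rev(\pi')\}$.
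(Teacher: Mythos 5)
Your proof follows essentially the same route as the paper's: the paper proves this via Corollary~\ref{cor::trunerr1}, whose content is exactly your chain $u+\epsilon'\ge {u^{T_0}_{\rho'}}^*\ge v^*_{\rho'}\ge\frac12(\rho^*-\rho')$, with the factor $2$ coming from the throughput bound $\E\big[\lim_T\frac1T\sum_t(r^1_t+r^2_t)\big]\ge\frac12$ established in Lemma~\ref{lem::slope_of_rev} (your accepted-vs-orphaned accounting, including the explicit dominance of \ov over \ad from ahead, is a slightly more careful rendering of the paper's one-line version of that bound). The one step where you diverge---and where your argument has a weak link---is the verification of the hypothesis $v^*_{\rho'}\ge 0$ needed for Proposition~\ref{prop::trunerr}. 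You claim that whenever the algorithm sets $low\leftarrow\rho$ it has found a policy of strictly positive value, hence ${v^{T_0}_{low}}^*>0$ and so $low$ lies below the root. But the solver is only $\epsilon/8$-accurate, and the paper's own reading of the guarantee (in the proof of Proposition~\ref{prop::correctness}) is that a positive returned value only yields ${v^{T_0}_{low}}^*>-\epsilon/8$; $low$ may therefore sit slightly \emph{above} the root of ${v^{T_0}_{\rho}}^*$, and your subsequent step $\rho'\le low<\rho^{T_0}\le\rho^*$ is not justified. The fix is precisely the $\epsilon/4$ down-shift in line~\ref{line_low_rho_assign}, which your argument never actually uses: since the slope of $\rho\mapsto {v^{T_0}_{\rho}}^*$ has magnitude at least $\frac12$ (the same throughput bound), lowering $\rho$ by $\epsilon/4$ raises the value by at least $\epsilon/8$, giving ${v^{T_0}_{\rho'}}^*>-\epsilon/8+\epsilon/8=0$ and hence $v^*_{\rho'}\ge{v^{T_0}_{\rho'}}^*>0$. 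With that repair your argument matches the paper's.
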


Both propositions are proved in Appendix~\ref{appendix::correctness}.

\subsection{Profit threshold Calculation}\label{subsec::calculation_threshold}
The threshold $\hat\alpha(\gamma)$ marks the minimal computational power an attacker needs in order to gain more than its fair share (see Section~\ref{sec::model}). It is crucial in assessing the system's resilience: An attacker above the threshold is able to receive increased returns on its investment, to grow steadily in resources,\textsuperscript{\ref{ftnt::growth}} and eventually to push other nodes out of the game.
The system is safe against such a destructive dynamic if all miners hold less than $\hat\alpha(\gamma)$ of the computational power.

Fix $\gamma$. A simple method allows us to lower bound the threshold: We first modify the action space of the overpaying $N_{\alpha}^T$ so as to disable the option of honest mining; technically, this is done by removing \ov from the feasible actions in $(1,0)$ and then, separately, removing \ad in $(0,1)$. Denote this modified MDP by $\widehat{N_{\alpha}^T}$. Then we solve $\widehat{N_{\alpha}^T}$, for some $\alpha$, error parameter $\epsilon$, and truncation $T$. If the $mdp\_solver$ returns a value smaller than $(-\epsilon)$ (both for when \ov is disabled in $(1,0)$ and when \ad is disabled in $(0,1)$), we are assured that honest mining is optimal in the original setup. We perform a search for the maximal $\alpha$ satisfying this requirement, i.e., $\hat{\alpha(\gamma)}$, in a fashion similar to the search in Algorithm~\ref{alg}.

\begin{corollary}\label{cor::alg_4_threshold}
Fix $\gamma$ and $\alpha$. If $u$ is the value returned by $mdp\_solver(\widehat{N_{\alpha}^T},\epsilon)$, and $u\le-\epsilon$, then \hon is optimal for $\alpha$. In other words, $\hat\alpha(\gamma)\ge\alpha$.
\end{corollary}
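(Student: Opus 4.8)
The plan is to specialize the whole machinery of Propositions~\ref{prop::method} and~\ref{prop::trunerr} to $\rho=\alpha$, and to use it not to locate a zero of $v^*_\rho$ but to \emph{detect} whether any deviation from honest mining can beat it. First I would record the baseline: since $\rev(\hon,\alpha,\gamma)=\alpha$, the honest policy attains $w_\alpha$-mean reward exactly $0$ in $\mdp_\alpha$, so $v^*_\alpha\ge 0$ and honest mining is always feasible with value $0$. By Proposition~\ref{prop::method} (monotonicity of $v^*_\rho$ together with the characterization of its root), $\rev^*:=\max_{\pi}\rev(\pi)>\alpha$ holds if and only if $v^*_\alpha>0$. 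As $\rev^*\ge\rev(\hon)=\alpha$ is automatic, \hon is optimal precisely when no policy attains strictly positive value in $\mdp_\alpha$; and since honest already attains $0$, it suffices to rule out strictly positive value among policies that \emph{differ} from honest.

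Next I would pin down where a profitable deviation can occur. Restricting to stationary policies (which are optimal for the average-reward criterion), I would observe that under \hon the only reachable states are $(1,0)$ and $(0,1)$ (regardless of fork label), where honest plays \ov and \ad respectively; a stationary policy agreeing with honest on these states never leaves this reachable set and therefore induces exactly the honest chain and the honest value. Consequently, any policy beating \hon must play something other than \ov at $(1,0)$, or something other than \ad at $(0,1)$. Hence such a policy is feasible in the MDP obtained from $\mdp_\alpha$ by forbidding \ov at $(1,0)$ (if it departs there) or in the one forbidding \ad at $(0,1)$ (if it departs there), so the best value achievable by a deviating policy in $\mdp_\alpha$ is at most the larger of the optimal values of these two honest-disabled infinite MDPs.

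To make this computable I would pass to the over-paying truncation. The over-payment inflates only the continuation value at the truncating states $\max\{\sa,\sh\}=T$, and is orthogonal to the honest-disabling at $(1,0)$/$(0,1)$; hence the argument behind Proposition~\ref{prop::trunerr} applies to the restricted action space and shows that the optimal value $\hat u$ of each finite honest-disabled over-paying MDP $\widehat{N_\alpha^T}$ upper-bounds the optimal value of the corresponding infinite honest-disabled MDP. The solver returns $u$ with $\hat u\le u+\epsilon$, so $u\le-\epsilon$ forces $\hat u\le 0$, and therefore the infinite honest-disabled value is $\le 0$. Applying this to both disabled variants (which is what the hypothesis $u\le-\epsilon$ abbreviates), no deviating policy attains positive value in $\mdp_\alpha$, whence $\rev^*=\alpha$ and \hon is optimal at $\alpha$.

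Finally, to upgrade optimality at $\alpha$ into $\hat\alpha(\gamma)\ge\alpha$, I would invoke monotonicity of deviation profitability in the hashrate: a stronger attacker is at least as capable, so the set of hashrates admitting a profitable deviation is upward closed; with $\alpha$ outside it, every element exceeds $\alpha$ and the infimum in~(\ref{eq::threshold_def}) is $\ge\alpha$. The main obstacle I anticipate is the covering step of the second paragraph—arguing rigorously that every profitable stationary deviation first departs from honest at $(1,0)$ or $(0,1)$ and is therefore captured by one of the two disabled MDPs—together with the clean transfer of the over-paying bound and the solver's $\epsilon$-error to the restricted action space; the monotonicity used in this last step, while intuitive, also deserves an explicit justification.
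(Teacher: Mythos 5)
Your proposal is correct and follows essentially the same route as the paper: you reduce optimality of \hon at $\alpha$ to non-positivity of the value of the honest-disabled infinite MDP (using that honest mining attains $w_{\alpha}$-value $0$ and that any profitable deviation must depart from honest at $(1,0)$ or $(0,1)$), and then bound that value from above via the over-paying truncation $\widehat{N_{\alpha}^{T}}$ and the solver's $\epsilon$-accuracy, exactly as the paper's short proof does by invoking Proposition~\ref{prop::trunerr}. You are in fact more explicit than the paper on two points it leaves implicit --- the covering of all deviating policies by the two disabled MDPs, and the upward-closedness in $\alpha$ needed to conclude $\hat\alpha(\gamma)\ge\alpha$ --- neither of which is proved in the paper either.
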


\section{Results}\label{sec::additional_results}
\subsection{Optimal Values}
We ran Algorithm~\ref{alg} for $\gamma$ from $\left\{0,0.5,1\right\}$, with various values of $\alpha$, using an MDP solver for MATLAB (an implementation of the relative value iteration algorithm developed by Chad{\'e}s et al.~\cite{MDPToolBOX}).
The error parameter $\epsilon$ was set to be $10^{-5}$ and the truncation was set to $T=75$. The values of $\rho$ returned by the algorithm, for $\gamma=0,0.5,1$, are depicted in Figure~\ref{fig::res_rhos} above. Additionally, some values for $\gamma=0$ appear in Table~\ref{table::revenue_gamma_0}, computed for parameters $T=95$ and $\epsilon=10^{-5}$. The results demonstrate a rather mild gap between the attacker's optimal revenue and the revenue of \sm. In addition, the graphs depict the upper bound on the revenue provided in Section~\ref{sec::upperbound}; as we stated there, the bound is obtained when $\gamma=1$, which is observed clearly in the corresponding graph.

\begin{table}[h]
\centering
\caption{\label{table::revenue_gamma_0}The revenue of the attacker under \sm and under the $\epsilon$-OPT policies, compared to the computed upper bound, for various $\alpha$ and with $\gamma=0$.}
\begin{tabular}{|c|c|c|c|}
\hline
$\alpha$ & $\sm$    & $\epsilon$-OPT & \makecell{Upper-\ \\Bound}  \\ \hline
$1/3$ & $1/3$ & $0.33705$  & $0.33707$ \\ \hline
0.35   & 0.36650 & 0.37077 & 0.37079\\ \hline
0.375  & 0.42118 & 0.42600  & 0.42604 \\ \hline
0.4   & 0.48372  & 0.48866 & 0.48904\\ \hline
0.425  & 0.55801 & 0.56808 & 0.57226\\ \hline
0.45   & 0.65177 & 0.66891 & 0.70109\\ \hline
0.475  & 0.78254 & 0.80172 & 0.90476\\ \hline
\end{tabular}
\end{table}
%\begin{table}[h]
%\label{table::revenue_gamma_0}
%\centering
%\caption{The revenue of the attacker under \sm and under the $\epsilon$-OPT policies, compared to the computed upper bound, for various $\alpha$ and with $\gamma=0$.}
%\begin{tabular}{|c|c|c|c|}
%\hline
%$\alpha$ & $\sm$    & $\epsilon$-OPT & Upper-bound  \\ \hline
%$1/3$ & $1/3$ & $0.33705$  & yyy \\ \hline
%0.35   & 0.366508512 & 0.37077 & 0.3708\\ \hline
%0.375  & 0.421182266 & 0.42600  & 0.4261 \\ \hline
%0.4   & 0.48372093  & 0.48863 & 0.4898\\ \hline
%0.425  & 0.558011079 & 0.56788 & 0.5767\\ \hline
%0.45   & 0.651773425 & 0.66809 & 0.7220\\ \hline
%0.475  & 0.782545988 & 0.7987 & 0.9048\\ \hline
%\end{tabular}
%\end{table}

\subsection{Optimal Policies} Below we illustrate two examples of the behaviour of the $\epsilon$-optimal policies returned by the algorithm.
The policies are described by tables, with the row index corresponding to $\sa$ and the columns to $\sh$. The table-entry $(\sa,\sh)$ contains three characters, specifying the actions to be taken in states $(\sa,\sh,irrelevant)$, $(\sa,\sh,relevant)$, and $(\sa,\sh,active)$ correspondingly. Table~\ref{tbl::policy_0.45_0.5} contains a description of an optimal policy, for an attacker with $\alpha=0.45, \gamma=0.5$. Table~\ref{tbl::policy_1/3} describes optimal actions for the setup $\alpha=1/3, \gamma=0$.
Notice that in the latter the \ma action is irrelevant, which allows us to regard in the second table only states with $fork=irrelevant$. In both tables only a subset of the states is depicted, the whole space being infinite.

\begin{table}
\centering
\caption{The optimal policy for an attacker with $\alpha=0.45$ and $\gamma=0.5$, for states $(\sa,\sh,\cdot)$ with $\sa,\sh\le 8$. The rows index the attacker's chain length $(\sa)$, and the columns the honest network's $(\sh)$. The three characters in each entry represent the action to be taken if $fork=irrelevant$, $relevant$, or $active$. `a', `o', `m', and `w' stand for $adopt$, $override$, $match$, and $wait$, respectively, while `$*$' represents an unreachable state.}
\begin{tabular}{|c|c|c|c|c|c|c|c|c|c|}
\hline
\backslashbox{$\sa$}{$\sh$}
& 0   & 1   & 2   & 3   & 4   & 5   & 6   & 7   & 8   \\ \hline
0 & \starr \starr \starr  & \starr a\starr  & \starr \starr \starr  & \starr \starr \starr  & \starr \starr \starr  & \starr \starr \starr  & \starr \starr \starr  & \starr \starr \starr  & \starr \starr \starr  \\ \hline
1 & w\starr \starr  & \starr m\starr  & a\starr \starr  & \starr \starr \starr  & \starr \starr \starr  & \starr \starr \starr  & \starr \starr \starr  & \starr \starr \starr  & \starr \starr \starr  \\ \hline
2 & w\starr \starr  & \starr mw & \starr m\starr  & w\starr \starr  & a\starr \starr  & \starr \starr \starr  & \starr \starr \starr  & \starr \starr \starr  & \starr \starr \starr  \\ \hline
3 & w\starr \starr  & \starr mw & \starr mw & wm\starr  & w\starr \starr  & a\starr \starr  & \starr \starr \starr  & \starr \starr \starr  & \starr \starr \starr  \\ \hline
4 & w\starr \starr  & \starr mw & \starr mw & omw & wm\starr  & w\starr \starr  & w\starr \starr  & a\starr \starr  & \starr \starr \starr  \\ \hline
5 & w\starr \starr  & \starr mw & \starr mw & \starr mw & omw & wm\starr  & w\starr \starr  & w\starr \starr  & a\starr \starr  \\ \hline
6 & w\starr \starr  & \starr mw & \starr mw & \starr mw & \starr mw & omw & wm\starr  & w\starr \starr  & w\starr \starr  \\ \hline
7 & w\starr \starr  & \starr mw & \starr mw & \starr mw & \starr mw & \starr mw & ooo & w\starr \starr  & w\starr \starr  \\ \hline
8 & w\starr \starr  & \starr ww & \starr mw & \starr mw & \starr mw & \starr mw & \starr m\starr  & oo\starr  & w\starr \starr  \\ \hline
\end{tabular}
\label{tbl::policy_0.45_0.5}
\end{table}

To illustrate how Table~\ref{tbl::policy_0.45_0.5} should be read, consider entry $(\sa,\sh)=(3,3)$, for instance. The string ``${wm*}$'' in this entry reads: ``in case a fork is $irrelevant$ (that is, the previous state was $(2,3)$), $wait$; in case it is $relevant$ (the previous state was $(3,2)$), $match$; the case where a fork is already $active$ is not reachable''.

\begin{table} %[!ht]
\centering
\caption{The optimal policy for an attacker with $\alpha=0.35$ and $\gamma=0$. The table describes the actions only for states of the form $(\sa,\sh,irrelevant)$ with $\sa,\sh\le 8$. (See previous caption)}
\begin{tabular}{|c|c|c|c|c|c|c|c|c|c|}
\hline
\backslashbox{$\sa$}{$\sh$}  & 0 & 1 & 2 & 3 & 4 & 5 & 6 & 7 & 8   \\ \hline
0 & \starr & a & \starr & \starr & \starr & \starr & \starr & \starr & \starr   \\ \hline
1 & w & w & w & a & \starr & \starr & \starr & \starr & \starr   \\ \hline
2 & w & o & w & w & a & \starr & \starr & \starr & \starr   \\ \hline
3 & w & w & o & w & w & a & \starr & \starr & \starr   \\ \hline
4 & w & w & w & o & w & w & w & a & \starr   \\ \hline
5 & w & w & w & w & o & w & w & w & a   \\ \hline
6 & w & w & w & w & w & o & w & w & w   \\ \hline
7 & w & w & w & w & w & w & o & w & w   \\ \hline
8 & w & w & w & w & w & w & w & o & w \\ \hline
\end{tabular}
\label{tbl::policy_1/3}
\end{table}

Looking into these optimal policies we see they differ from \sm in two ways: First, they defer using \ad in the upper triangle of the table, if the gap between \sh and \sa is not too large, allowing the attacker to ``catch up from behind''. Thus, apart from block withholding, an optimal attack may also contain another feature: attempting to catch up with the longer public chain from a disadvantage. This implies that the attacker violates the longest-chain rule, a result which counters the claim that the longest-chain rule forms a Nash equilibrium (see~\cite{FELTEN}, and discussion in Section~\ref{sec::related})

Secondly, they utilize \ma more extensively, effectively overriding the honest network's chain (w.p. $\gamma$) using one block less.

\subsection{Thresholds}\label{subsec::additional_results_threshold}
Following the method described in Section~\ref{subsec::calculation_threshold}, we are able to introduce lower bounds for the profit thresholds. Figure~\ref{fig::thresholds} depicts the thresholds induced by optimal policies, compared to that induced by \sm. The results demonstrate some cutback of the thresholds, when considering policies other than \sm.

\subsection{Evaluation of Protocol Modifications}
Several protocol modifications have been suggested to counter selfish mining attacks. It is important to provably verify the merit of such suggestions. This can be done by adapting our algorithm to the MDPs induced by these modifications.
Below we demonstrate this with respect to the rule suggested by Eyal and Sirer.
According to the Bitcoin protocol, a node which receives a chain of length equal to that of the chain it currently adopts, ought to reject the new chain. Eyal and Sirer suggest to instruct nodes to accept the new chain with probability $1/2$. We refer to it below as ``uniform tie breaking''.

The immediate effect of this modification is that it restricts the efficiency of the \ma action to $1/2$, even when the attacker's communication capabilities correspond to $\gamma>1/2$. Admittedly, this limits the power of strongly communicating attackers, and thus guarantees a positive lower bound on the threshold for profitability of \sm (which was 0, when $\gamma=1$). On the other hand, it has the apparent downside of enhancing the power of poorly communicating attackers, that is, it allows an attacker to \ma with a success-probability $1/2$ even if its ``real'' $\gamma$ is smaller than $1/2$. %\footnote{In reality, it could be the case that most of the attackers fall into this category.}

Unfortunately, our results show that this protocol enhances the profit of some attackers from deviations. For example, by applying Algorithm~\ref{alg} to the setup induced by uniform tie breaking, we found that attackers in the range $\left\{\gamma=0.5\right.,$ $\left.0.2321<\alpha<0.5\right\}$ benefit from this modification. In particular, the profit threshold deteriorates from $0.25$ to $0.2321$.
%, and the effective $\gamma$ is $???$ rather than $1/2$.
Figure~\ref{fig::thresholds5050} demonstrates this by comparing the attacker's optimal revenue under the uniform tie breaking protocol with the optimal revenue under the original protocol. The dominating policy is described in Table~\ref{table::50-50}.

The intuition behind this result is as follows: Under uniform tie breaking,  two chains of equal length will be mined equally regardless of the passage of time between the transmission of their last blocks. This allows an attacker to perform \ma even if it did not have a block prepared in advance, thereby granting it additional chances to catch up from behind. Deviation from the longest-chain rule thus becomes even more tempting.
% allowing the attacker to catch up with one block less, with probability $1/2$. The intuitive explanation for this effect is that with the uniform tie breaking rule, attackers can perform \ma even if they did not have a block prepared in advance. Two chains of equal length will be mined equally regardless of the passage of time between the transmission of their last block. This grants attackers additional abilities to catch up.
\begin{figure}
\centering
\includegraphics[scale=0.7]{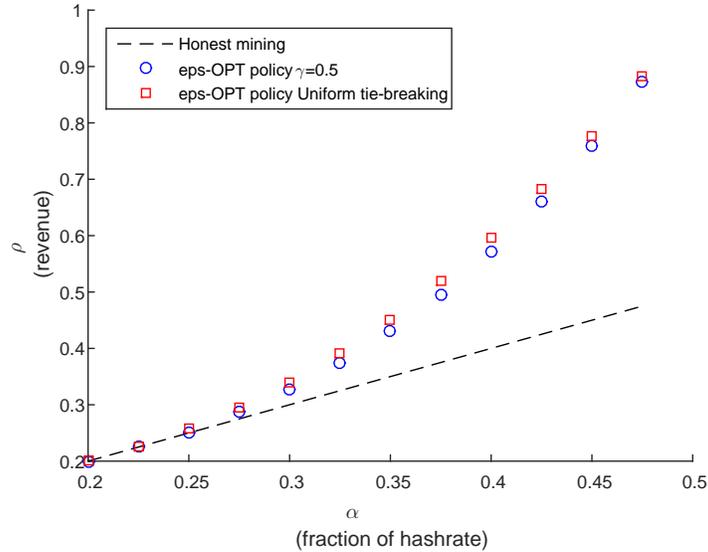}
\caption{The attacker's optimal revenue under uniform tie breaking, compared to that under the original protocol (with $\gamma=1/2$) and to \hon.}
\label{fig::thresholds5050}
\end{figure}

\begin{table}[h]
\centering
\caption{The optimal policy for an attacker with $\alpha=0.25$, under the ``50-50'' protocol modification suggested in~\cite{ES}. Only states $(\sa,\sh,\cdot)$ with $\sa,\sh\le 8$ are depicted. This policy outperforms honest mining.}
\label{table::50-50}
\begin{tabular}{|c|c|c|c|c|c|c|c|c|c|}
\hline
  & 0   & 1   & 2   & 3   & 4   & 5   & 6   & 7   & 8   \\ \hline
0 & \starr w\starr & aa\starr & \starr\starr\starr & \starr\starr\starr & \starr\starr\starr & \starr\starr\starr & \starr\starr\starr & \starr\starr\starr & \starr\starr\starr \\ \hline
1 & w\starr\starr & \starr m\starr & w\starr\starr & a\starr\starr & \starr\starr\starr & \starr\starr\starr & \starr\starr\starr & \starr\starr\starr & \starr\starr\starr \\ \hline
2 & w\starr\starr & \starr oo & m\starr\starr & w\starr\starr & a\starr\starr & \starr\starr\starr & \starr\starr\starr & \starr\starr\starr & \starr\starr\starr \\ \hline
3 & w\starr\starr & \starr w\starr & \starr oo & m\starr\starr & w\starr\starr & a\starr\starr & \starr\starr\starr & \starr\starr\starr & \starr\starr\starr \\ \hline
4 & w\starr\starr & ww\starr & \starr w\starr & \starr oo & m\starr\starr & w\starr\starr & w\starr\starr & a\starr\starr & \starr\starr\starr \\ \hline
5 & w\starr\starr & ww\starr & ww\starr & \starr w\starr & \starr oo & m\starr\starr & w\starr\starr & w\starr\starr & a\starr\starr \\ \hline
6 & w\starr\starr & ww\starr & ww\starr & ww\starr & \starr w\starr & \starr oo & m\starr\starr & w\starr\starr & w\starr\starr \\ \hline
7 & w\starr\starr & ww\starr & ww\starr & ww\starr & ww\starr & \starr w\starr & \starr oo & m\starr\starr & w\starr\starr \\ \hline
8 & w\starr\starr & ww\starr & ww\starr & ww\starr & ww\starr & ww\starr & \starr w\starr & \starr oo & m\starr\starr \\ \hline
\end{tabular}
\end{table}

\subsection{Simulations}
In order to verify the results above we built a selfish mining simulator which we implemented in Java. We ran the simulator for various values of $\alpha$ and $\gamma$ (as in the figures above), where the attacker follows the policies generated by the algorithm. Each run was performed  for $10^7$ rounds (block creation events). The relative revenue of the attacker matched the revenues returned by the algorithm, up to an error of at most $\pm10^{-6}$.

\section{A Model that Considers Delays}\label{sec::delays}
So far, our model assumed that no new block is created until all preceding published blocks arrived at all nodes. In reality, there are communication delays
between nodes in the network, including between the attacker and others. Thus, instead of modeling the attacker's communication capabilities via the parameter $\gamma$, it may be better to directly consider the non-negligible effect of network latency directly. Delays are especially noticeable when the system's throughput is increased by allowing larger blocks to form or by increasing block creation rates (see~\cite{GHOST}).
While this makes the encoding of the game rather complicated, \emph{a priori }we can make the following observations:
%as multiple chains may form, even within the honest network. Qualitatively, we observe two different types of delays with several implications:
\begin{enumerate}
\item The attacker has only a partial knowledge of the world state. Furthermore, blocks which it publishes may arrive at the honest network too late, which potentially reduces the benefit of block withholding.
\item Natural forks occur within the honest network, and consequently its chain grows in a rate lower than one block per round; this potentially makes attacks more successful.
\item Natural forks involving the attacker imply that the game arrives at non-trivial states, even under honest mining. The attacker may thus mine honestly until some particular deviation becomes feasible.\label{point_ntstates}
\item In the presence of delays, the attacker's share under honest mining might be greater than $\alpha$, which raises the bar for dishonest strategies to prevail.\footnote{See~\cite{INCLUSIVE}, for one result quantifying this effect.}
\end{enumerate}

The overall effect of the above cannot be determined without knowing the topology of the network and the attacker's location in it (as well as its knowledge about the topology).
Still, some insight is possible.
Following the third observation above, we notice that a dishonest policy $\pi$ is one in which for some $\sh>\sa: \pi(\sa,\sh)\neq \ad$ and/or for some $\sa>\sh: \pi(\sa,\sh)\neq \ov$ (whereas under no delays honesty in $(1,0)$ and $(0,1)$ suffices).
We claim that, consequently,
the profit threshold equals 0. In other words, \emph{every} attacker benefits from some form of dishonest mining.

\begin{clm}
When the network suffers some delays, the attacker has a strict better-response strategy to honest mining, for any $\alpha>0$.
\end{clm}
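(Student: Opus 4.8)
The plan is to reduce the claim to a single robust local improvement over honest mining, leveraging the linearization from Section~\ref{sec::alg}. Set $\rho:=\rev(\hon)$, the attacker's relative revenue under honest mining in the delayed model (which, per the fourth observation above, may exceed $\alpha$). A direct computation from the definitions shows that a policy with well-defined long-run averages $\bar r^1,\bar r^2$ satisfies $v^{\pi}_{\rho}=(1-\rho)\bar r^1-\rho\,\bar r^2$, so at $\rho=\rev(\pi)$ this equals $0$; in particular honest mining attains average $w_\rho$-reward exactly $0$ in $M_\rho$. Hence, by the reasoning of Proposition~\ref{prop::method} and the monotonicity of $v^*_{(\cdot)}$, it suffices to exhibit a single policy $\pi$ with strictly positive average $w_\rho$-reward: then $v^*_\rho>0$, the root $\rho^*$ of $v^*_{(\cdot)}$ exceeds $\rho$, and the optimal policy at $\rho^*$ is a strict better response to \hon. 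Equivalently, I must show that the Bellman optimality equation for \hon in $M_\rho$ is violated at some state visited with positive long-run frequency.

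First I would fix the triggering state. By observation~\ref{point_ntstates}, once delays are present the honest process is no longer confined to the trivial states $(1,0)$ and $(0,1)$: with positive probability an attacker block and an honest block are created at the same height before either propagates, so the honest power genuinely splits and the game reaches a non-trivial configuration, in particular a ``one-behind'' state $(\sa,\sh)$ with $\sh=\sa+1$ and $\sa\ge 1$ that is unreachable under honest mining without delays. The essential structural input is that this event has probability bounded below by a positive constant for every $\alpha>0$ and every genuinely delayed topology, so the triggering state carries positive long-run frequency under \hon.

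At that state I would exhibit the improving action. Honest mining prescribes \ad, discarding the attacker's $\sa$ blocks and crediting the honest chain; the deviation \wa instead keeps the attacker's blocks alive at no additional downside, since with probability $\approx\alpha$ the attacker equalizes and can subsequently \ov or \ma to recover its blocks and steal an honest one, while on the complementary event it is no worse off than had it adopted. Comparing the two $Q$-values in $M_\rho$ then shows the \wa-branch has strictly larger continuation value than the \ad-branch, so \hon violates Bellman optimality here; the induced single-deviation policy attains positive average $w_\rho$-reward, giving $v^*_\rho>0$ and hence a strict better response, so $\hat\alpha(\gamma)=0$.

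I expect the main obstacle to be the robustness of the triggering argument across all delay models: the delay dynamics are only partially specified, so the proof cannot rely on a particular topology and must instead lower-bound the probability of the split/non-trivial event using only that delays are strictly positive and $\alpha>0$. Making the single-deviation argument rigorous for the relative (non-linear) objective is a secondary point, but it is handled cleanly by passing through $M_\rho$ as above, so the delicate part remains the uniform positive-frequency guarantee rather than the reward comparison itself.
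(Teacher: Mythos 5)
Your overall scaffolding (pass to $M_\rho$ with $\rho=\rev(\hon)$, find one positively-recurrent state where honest mining violates Bellman optimality, conclude via monotonicity) is consistent with the paper's machinery, and you correctly identify that delays are what make non-trivial states reachable for free under \hon. But the core of your argument --- that at a one-behind state $(\sa,\sa+1)$ with $\sa\ge 1$ the action \wa dominates \ad because it carries ``no additional downside'' --- is a genuine gap, and it is exactly the point the whole claim turns on. The deviation is a gamble: on the failure branch the attacker ends up adopting at $(\sa,\sa+2)$ or worse, crediting the honest network with strictly more blocks, so there \emph{is} a downside of order $\rho$ per extra round. On the success branch the attacker must get \emph{two} blocks ahead of its current position (equalizing to $(\sa+1,\sa+1)$ does not enable \ov, and \ma is infeasible there), so the success probability is roughly $q\approx\alpha^2 e^{-(1-\alpha)\lambda(d_{a,h}+d_{h,a})}$, not $\approx\alpha$. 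The expected gain from deviating at $(\sa,\sa+1)$ is then on the order of $q\cdot(\sa+1)-(1-q)\cdot\rho$, which is \emph{negative} for small $\alpha$ at any bounded $\sa$. Your argument, taken at face value, would also prove the threshold is zero at states like $(1,2)$ regardless of how small $q$ is, which cannot be right.

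The missing idea --- and the heart of the paper's proof --- is to let the stake grow: the deviation state is taken to be $(k-1,k)$ with $k$ large. Adopting there forfeits $\rho\cdot k$, which grows linearly in $k$, while the cost of trying and failing exceeds the cost of adopting by only a constant, and the success probability $q$ is bounded below independently of $k$ (the explicit double integral in the paper). The comparison then reduces to $(k+1)\cdot q-\rho_k>0$, which holds for every $\alpha>0$ once $k>1/q$. Such states are still reached with positive probability under \hon in a delayed network, which is all the renewal-reward comparison needs. So the difficulty is not, as you suggest, the robustness of the ``triggering'' event across delay topologies (a single crude positive lower bound on $q$ suffices); it is the quantitative reward comparison at the deviation state, which your qualitative ``free option'' argument does not establish.
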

Below we provide a proof sketch, which contains the jist of the claim while avoiding the involved formalization of the process under delays. We do mention that the rewards in $\mdp\rh$ are now given by the expected outcome of future events, specifically the resolution of future conflicts. For instance, following an \ov action in $(\sa,\sh)$, the attacker is awarded $(1-\rho)\cdot(\sh+1)$ times the probability that its block will be accepted by all nodes, eventually.\footnote{It can be shown that this is actually decided in finite time, in expectation~\cite{GHOST}.}
\commentin{\begin{proof}[sketch]
Fix $k\in\mathbb{N}$, and let $\pi_k$ be the policy in which the attacker mines honestly, until some state $(k-1,k)$ is reached (and observable to it). Upon which, instead of adopting, the attacker tries to catch up from behind, until it either succeeds (and then it overrides) or it learns of another block of the honest network (and then it adopts). Formally, $\pi_k(\sa,\sh):=\left\{\begin{array}{ccc}
\ov \quad \quad &  \sa>\sh    \\
\ad \quad \quad &  \sh> \sa \wedge \sh\neq k    \end{array} \right\}
$.
Since $\pi_k$ is stationary, we can analyze its long-term earnings in $\mdp\rh$ following the result from Lemma~\ref{lem::strong_law_applies}.

%It is more convenient to apply a ``relativity'' adjustment of this model by putting $d_{a,h}\leftarrow d_{a,h}+d_{h,a}$ and $d_{h,a}\leftarrow 0$. This results in an equivalent model from the attacker's point of view (save the initialization period of the process), and allows us to think of the attacker as if it were aware of all blocks created in the nework.

%We assume first that no delays occur on the attacker's \emph{incoming} links to all other nodes. In particular, at any given time the attacker is aware of all created blocks.
%This implies that the probability distribution over the state following an \ad action is stationary: it depends neither on the history of the game nor on the absolute time time. Consequently, its renewal time, $\tau_1(\pi_k)$, is reached every time an \ad is performed.

Denote $\rho_h:=\rev(\hon)$ and $\rho_k:=\rev(\pi_k)$.
Upon reaching $(k-1,k)$, the attacker's immediate reward under \hon is $\left(-\rho_h\cdot k\right)$. On the other hand, if it follows $\pi_k$, its expected immediate rewards are at least
\begin{equation}
\label{eq::delays}
q\cdot\left(1-\rho_k\right)\cdot(k+1)-(1-q)\cdot\rho_k\cdot(k+1),
\end{equation}
where $q$ is a lower bound on the probability that it will succeed to bypass the honest network's chain \emph{and} override it in time. 
The positive term in~(\ref{eq::delays}) corresponds to the case where the scheme ends successfully (with an \ov), and the negative one to the complementary scenario. 
To avoid dependencies on $k$, $q$ can be taken to equal
\begin{equation}
\int_{0}^{\infty}\int_{0}^{\infty}\left(\alpha\cdot\lambda\right)^2\cdot e^{-\alpha\lambda\cdot(t+s)}\cdot e^{-(1-\alpha)\cdot\lambda\cdot (t+s+d_{a,h}+d_{h,a})}dsdt,
\end{equation}
where $d_{h,a}$ is the communiation delay on the link from the honest cluster to the attacker, and $d_{a,h}$ the delay on the reversed link (for simplicity, we assume that in both directions there are single links connecting these parties to one another). Indeed, the integrand above represents the probability that the next two blocks of the attacker will take a time of $t+s$ to be generated ($\left(\alpha\cdot\lambda\right)^2\cdot e^{-\alpha\lambda\cdot(t+s)}$), and that the honest network hasn't been able to create a block since the beginning of the propagation of its $k$'th block, and until the attacker's $(k+1)$-block propagated throughout the network ($e^{-(1-\alpha)\cdot\lambda\cdot (t+s+d_{a,h}+d_{h,a})}$).

Assume by way of negation that $\rho_h\ge\rho_k$.
If $k$ is large enough, the following relation holds:
\begin{align}
& \label{eq::delay_rho_inequality1} q\cdot\left(1-\rho_k\right)\cdot(k+1)-(1-q)\cdot\rho_k\cdot(k+1)-(-\rho_h\cdot k) \ge \\ &\nonumber
q\cdot\left(1-\rho_k\right)\cdot(k+1)-(1-q)\cdot\rho_k\cdot(k+1)+\rho_k\cdot k = \\& (k+1)\cdot q - \rho_k > 0.
\label{eq::delay_rho_inequality3}
\end{align}
This implies that the expected rewards of $\pi_k$, resulting form state $(k-1,k)$ being reached, exceed those of honest mining upon reaching this state. Since this is the only state in which these strategies differ, the inequality above implies that $\pi_k$ strictly dominates honest mining, thus $\rho_k=\rev(\pi_k)>\rev(\hon)=\rho_h$. We conclude that any attacker can benefit from deviating in some states from honest mining, hence that the profit threshold vanishes.

\qed
\end{proof}}
The intuition behind this result is clear: The attacker suffers a significant loss if it adopts in $(k-1,k)$, when $k$ is large, and it thus prefers to continue the fork that formed naturally, and attempt to catch up.

This illustrates the importance of the policies found by Algorithm~\ref{alg}. As we've seen (Section~\ref{sec::additional_results}), those dominate \sm in that they delay adoption, i.e., they
allow the continuation of the attack even when the honest network's chain is longer than the attacker's. While the additional benefit was rather mild, this added feature becomes more important in networks with delays, where splits in the chain occur naturally with some probability, even when honest mining is practiced by all.

To gain further understanding of selfish mining under delays it would be important to quantify the optimal gains from such deviations.
%Even though any miner can gain from deviation regardless of size, we would still expect the gains to be low for small miners.
We leave this as an open question for future research. Still, it is clear that Bitcoin will be more vulnerable to selfish mining if delays become more prominent, e.g., in the case of larger blocks (block size increases are currently being discussed within the Bitcoin developers community).

\section{Effect on Double Spending Attacks}\label{sec::doublespend}
In this section we discuss the qualitative effect selfish mining has on the security of payments. The regular operation of bitcoin transactions is as follows: A payment maker signs a transaction and pushes it to the Bitcoin network, then nodes add it to the blocks they are attempting to create. Once a node succeeds it publishes the block with its content. Although the payee can now see this update to the public chain of blocks, it still waits for it to be further extended before releasing the good or service paid for. This deferment of acceptance guarantees that a conflicting secret chain of blocks (if exists) will not be able to bypass and override the public one observed by the payee, thereby discard the transaction. Building a secret chain in an attempt to reverse payments is called a \emph{double spending} attack.

\mysubsub{Success-probability}
Satoshi Nakamoto, in his original white paper, provides an analysis regarding double spending in probabilistic terms: \emph{Given that the block containing the transaction is followed by $n$ subsequent blocks, what is the probability that an attacker with computational power $\alpha$ will be able to override this chain, now or in the future?} Nakamoto showed that the success-probability of double spending attacks decays exponentially with $n$. Alternative and perhaps more accurate analyses exist, see~\cite{MENI},\cite{GHOST}.%and his analysis can be used by a merchant to determine when to consider a payment as accepted.

\mysubsub{Cost}
While a single double spending attack succeeds with negligible probability (as long as the payee waits long enough), regrettably, an attacker which \emph{continuously} executes double spending attempts will eventually succeed (\emph{a.s.}). We should therefore be more interested in the cost of an attack than in its success-probability.
Indeed, every failed double spending attack costs the attacker the potential award it could have gotten had it avoided the fork and published its blocks right away.

Observe, however, that a smart strategy for an attacker would be to continuously employ selfish mining attacks, and upon success combine them with a double spending attack. Technically, this can be done by regularly engaging in public transactions, while always hiding a conflicting one in the attacker's secret blocks.\footnote{In the worst case, the attacker is frequently engaged in ``real''  transactions anyways, hence suffers no loss from them being occasionally confirmed, when attacks fail.} There is always some probability that by the time a successful selfish mining attack has ended, the payment receiver has already accepted the payment, which additionally results in a successful double spending.

To summarize, the existence of a miner for which selfish mining is at least as profitable as honest mining fundamentally undermines the security of payments, as this attacker bears no cost for continuously attempting to double spending, and it eventually must succeed. Similarly, an attacker that cannot profit from selfish mining alone, might be profitable in the long run if it combines it with double spending, which potentially has grave implications on the profit threshold.

\section{Related Work}\label{sec::related}
The Bitcoin protocol was introduced in a white paper published in 2008 by Satoshi Nakamoto~\cite{SATOSHI}.
In the paper, Nakamoto shows that the blockchain is secure as long as a majority of the nodes in the Bitcoin network follow the protocol.
Kroll et al.~\cite{FELTEN} show that, indeed, always extending the latest block in the blockchain forms a (weak, non-unique) Nash equilibrium, albeit under a simpler model that does not account for block withholding.

On the other hand, it has been suggested by various people in the Bitcoin forum that strong nodes might be incentivized to violate the protocol by withholding their blocks~\cite{bitcoinforum}. Eyal and Sirer proved this by formalizing a block withholding strategy \sm and analyzing its performance~\cite{ES}. Their strategy thus violates the protocol's instruction to immediately publish one's blocks, but still sticks to the longest-chain rule (save a selective tie breaking). \sm1 still abandons its chain if the honest nodes create a longer chain.
One result of our paper is that even adhering to the longest-chain rule is not a best response.
We also prove what the optimal policies are, and compute the threshold under which honest mining is a (strict, unique) Nash equilibrium.
Additional work on selfish mining via block withholding appears in~\cite{bahack2013theoretical}. Transaction propagation in Bitcoin has also been analyzed from the perspective of incentives. Results in~\cite{babaioff2012bitcoin} show that nodes have an incentive not to propagate transactions, and suggests a mechanism to correct this. Additional analysis from a game theoretic perspective has also been conducted with regards to interactions pools, either from a cooperative game theory perspective~\cite{lewenberg2015bitcoin}, or when considering attacks between pools~\cite{eyal2014miner}.

A recent paper by G{\"o}bel et al. has evaluated \sm in the presence of delays~\cite{SM_Delays}. They show that \sm is not profitable under a model of delays that greatly differs from our own (in particular, they assume that block transmission occurs as a memoryless process). While \sm may indeed be unprofitable when delay is modeled, we show that other profitable selfish mining attacks exist.
Additional analysis of block creation in the presence of delays and its effects on throughput and double spending appears in~\cite{GHOST,INCLUSIVE,decker2013information}.

%Their analysis provides an approximation for the growth rate of the blockchain, when one pool follows \sm (assuming a certain distribution of the computational power and communication delays)~\cite{SM_Delays}.
Further discussion on Bitcoin's stability can be found in a recent survey by Bonneau et al.~\cite{JOSEPH}.

%Many works since have investigated the Bitcoin protocol, as well as various other aspects of it. See~\cite{JOSEPH} for a comprehensive survey.

%\section{Discussion}\label{sec::discussion}
%Profitable selfish mining attacks require attackers to hold a considerable amount of computational power. We have provided theoretical bounds on the attacker's size under which it does not profit from \emph{any }form of selfish mining.
%
%Unfortunately, this theoretical barrier deteriorates when the network suffers from communication delays. Still, if the delays are not significant then attacks by weak attackers will only be carried out following natural forks in the network, which might be rare events (if the delays are not too significant). %While long forks in the network may be rare events (the delays possibly negligible), the consequences might be destructive in that

\bibliographystyle{plain}
\bibliography{SelfishMining2}

\appendix

\section{Generality of the Model}\label{appendix::model}
As mentioned in Section~\ref{sec::model}, the most general setup would be for an attack-strategy
to consider also building its blocks in different places in the block-tree (say, extending a previously abandoned chain, or adopting a subchain of the public chain) and/or to publish more than $\sh+1$ blocks upon overriding the public honest chain. It is clear, intuitively, why such actions are suboptimal. Below we make this formal.

Let $\pi$ be an optimal strategy, when the above actions are available to the attacker as well.

\emph{Part I:}Assume there exists a state $(\sa,\sh)$ where the attacker publishes $\sh+j$ blocks with $j>1$; we denote this by $\pi(\sa,\sh)=$``$\ov\text{ \emph{by} }j$''. We now construct a policy $\pi'$, which follows $\pi$ everywhere except that $\pi'(\sa,\sh)=\wa$. By Corollary~\ref{cor::second_cor}, it suffices to show that $v^{\pi'}\rh\ge v^{\pi}\rh=0$, where $\rho$ is the relative revenue induced by $\pi$; this will imply that no reduction in $\rev$ occurs when switching from ``$\ov\text{ \emph{by} }j$'' to \wa.

For every state $X$, denote by $v^{\pi}\rh(X)$ the expected value of $(1-\rho)\cdot \E\left[R^{1,1}(\pi)\right]-\rho\cdot \E\left[R^{2,1}(\pi)\right]$ conditioned on arriving at state $X$ (recall that $\tau_1$ is the terminating state of the first run). We need to show that $v^{\pi'}\rh(\sa,\sh)\ge v^{\pi}\rh(\sa,\sh)$, as this is the only states where these policies differ. Observe that $v^\pi\rh(\sa,\sh)=(1-\rho)\cdot j+ v^\pi\rh(\sa-j,0)$. This is because either $j<\sa$, and this action cannot lead to a termination (hence the addition of $v^\pi\rh(\sa-j,0)$), or $j=\sa$, and then $v^\pi\rh(\sa-j,0)=v^\pi\rh(0,0)=0$, which fits the fact that a termination occurred. We thus need to show that $v^{\pi'}\rh(\sa,\sh)\ge(1-\rho)\cdot j+ v^\pi\rh(\sa-j,0)$.

Indeed, consider the case where $\pi'$ performs ``$\ov\text{ \emph{by} }j$'' if $X=(\sa,\sh+1)$ or ``$\ov\text{ \emph{by} }(j+1)$'' if $X=(\sa+1,\sh)$. Note that the action in the first case is feasible, since $\sa\ge\sh+j>\sh+1$, and similarly $\sa+1\ge\sh+j+1>\sh$, for the second case. In the former case we obtain $v^{\pi'}\rh(\sa,\sh+1)=(1-\rho)\cdot j+ v^\pi\rh(\sa-j,0)$, and in the latter, $v^{\pi'}\rh(\sa,\sh+1)=(1-\rho)\cdot (j+1)+ v^\pi\rh(\sa+1-(j+1),0)$. Therefore, we have presented an action-scheme which guarantees $\pi'$ the value of $\pi$. As $\pi$ (hence $\pi'$) optimize the value $v^{\pi}\rh(X)$, for any $X$, we have that the value of $\pi$ (hence of $\pi'$) in the states $(\sa+1,\sh)$ and $(\sa,\sh+1)$ is at least as high as $(1-\rho)\cdot j+ v^\pi\rh(\sa-j,0)$, which completes this part of the proof.

\emph{Part II:}
We claimed, additionally, that the attacker will never adopt branches in the block-tree other than its current secret one and the honest faction's current longest one. We now aim to justify this assertion, albeit with some informalities; a formal proof is not possible under our model, because it implicitly assumes that actions as \ov and \ad grant immediate reward, whereas if the attacker adopts older abandoned chains it can hypothetically reverse such decisions. Nonetheless it is very clear why this would be suboptimal:

For any $(\sa,\sh)$, let $A_1,...,A_{\sa}$ denote the attacker's chain, and $H_1,...,H_{\sh}$ the honest network's chain, and let $H_0$ be the block that $A_1$ and $H_1$ extend (it is now public, but may have belonged to the attacker).\footnote{In case the honest network is forked, pick one of them arbitrarily; blocks are anonymous, and they are only accepted or rejected according to the lengths of their chains, which are in this case equal.} Let now $(\sa,\sh)$ be the first state at which the attacker decides to deviate and extend a block $B$ other than $A_{\sa}$ or $H_{\sh}$. If $B$ was not created after $H_0$ (and $B\neq H_0$), then it was available to the attacker at the time it began extending $H_0$. By the choice of $(\sa,\sh)$, extending $H_0$ was then at least as profitable as extending $B$, and this dominance is invariant under future events (e.g., by the public chain that formed above $H_0$). Thus the attacker can just as well repeat its initial choice of $H_0$ over $B$.

A similar argument holds for the case where $B$ was created after $H_0$ (or $B=H_0$). Denote by $l$ the length of the attacker's chain upon the creation of $B$. Extending $A_{l}$ was then at least as profitable as extending $B$, by the choice of $(\sa,\sh)$, and this again is not altered by future events. All the same, the attacker can just as well repeat its choice and choose $A_{l}$ over $B$. in conclusion, we can restrict our attention to strategies restricted to our three-action model (four, with \wa), without loss of generality. This also enables a Markovian model, fortunately, as described in Section~\ref{sec::model}.

\section{Proof of Proposition~\ref{prop::upperbound}} \label{appendix::upperbound}
\noindent {\bf Proposition~\ref{prop::upperbound}:}\\
\emph{For any $\pi$, $\rev(\pi,\alpha,\gamma)\le\frac{\alpha}{1-\alpha}$. Moreover, this bound is tight, and achieved when $\gamma=1$.
}
\commentin{\begin{proof}
We can map every block of the honest network which was overridden, to a block of the attacker; this is because \ov requires the attacker to publish a chain longer than that of the honest network's.

Let $k_T$ be the number of blocks that the attacker has built up to time $T$. The honest network thus built $l_T:=T-k_T$ by this time. The argument above shows that $l_T-\sum_{t=1}^Tr^2_t\le k_T$. Also, $\Pr(l_T>k_T)\rightarrow1$, when $T\rightarrow \infty$. Therefore, the relative revenue satisfies:
\begin{align}
& \label{eq::revbound1}\rev(\pi)=\lim\limits_{T\rightarrow\infty}\frac{\sum_{t=1}^Tr^1_t}{\sum_{t=1}^Tr^1_t+\sum_{t=1}^Tr^2_t}\le \lim\limits_{T\rightarrow\infty}\frac{\sum_{t=1}^Tr^1_t}{\sum_{t=1}^Tr^1_t+l_T-k_T} = \\&
\lim\limits_{T\rightarrow\infty}\frac{1}{1+\left(l_T-k_T\right)/\left(\sum_{t=1}^Tr^1_t\right)} \le  \lim\limits_{T\rightarrow\infty}\frac{1}{1+\left(l_T-k_T\right)/k_T} = \lim\limits_{T\rightarrow\infty} \frac{k_T}{l_T}.\label{eq::revbound2}
\end{align}
The SLLN applies naturally to $k_T$ and $l_T$, implying that the above equals $\frac{\alpha\cdot T}{(1-\alpha)\cdot T} = \frac{\alpha}{1-\alpha}$ (\emph{a.s.}).

To see that the bound is achieved in $\gamma=1$, observe that the policy SM1 satisfies the property that every block of the attacker overrides one block of the honest network, and that none of the attacker's blocks are overridden (as the policy never reaches a state where it needs to $adopt$, except when $\sa=0$). This turns both inequalities in~(\ref{eq::revbound1})-(\ref{eq::revbound2}) into equalities.
\qed\end{proof}
}

\section{Correctness of Algorithm~\ref{alg}}\label{appendix::correctness}

In this section we prove that Algorithm~\ref{alg} halts and that its output meets the conditions specified therein.
We begin with applying here a Strong Law of Large Numbers, which will prove useful along our path.
Under a fixed stationary policy $\pi$, we denote by $\tau_1$ the renewal time of the game. Formally, $\tau_1$ is the time, or number of visited states, until the game reaches a state $s$ from which the transition probabilities are $\alpha$ to state $(1,0)$ and $1-\alpha$ to $(0,1)$.
\begin{lemma}\label{lem::strong_law_applies} Let $\pi$ be some fixed policy of $\mdp\rh^{T_0}$. Denote $R^{k,1}(\pi)=\sum_{t=1}^{\tau_1} r^k_t(\pi)$ (for $k=1,2$).
\begin{align}
&\lim\limits_{T\rightarrow\infty}\frac1T\sum_{t=1}^T r^k_t(\pi)=  \E\left[\lim\limits_{T\rightarrow\infty}\frac1T\sum_{t=1}^T r^k_t(\pi)\right] = \frac{\E\left[R^{k,1}(\pi)\right]}{\E[\tau_1]}\quad (a.s.)\label{eq::strong_law_applies2},
\end{align}
for $k=1,2$. Similarly,
\begin{align}
&\lim\limits_{T\rightarrow\infty}\frac1T\sum_{t=1}^T w_{\rho}(r_t(\pi))=  \E\left[\lim\limits_{T\rightarrow\infty}\frac1T\sum_{t=1}^T w_{\rho}(r_t(\pi))\right]  \\ & \qquad\qquad=\frac{(1-\rho)\cdot \E\left[R^{1,1}(\pi)\right]-\rho\cdot\E\left[R^{2,1}(\pi)\right]}{\E[\tau_1]} \quad (a.s.)\label{eq::strong_law_applies}
\end{align}
\end{lemma}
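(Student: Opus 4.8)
The plan is to recognize the state process $\{X^\pi_t\}$ induced by a fixed stationary policy $\pi$ in the truncated MDP $\mdp\rh^{T_0}$ as a regenerative (renewal--reward) process, and to invoke the strong law for such processes. The natural regeneration epochs are the visits to the ``reset'' states $s$ from which the one-step transition law is $\alpha$ to $(1,0)$ and $1-\alpha$ to $(0,1)$; by definition $\tau_1$ is the first such epoch. The structural fact that makes this work is that \emph{all} reset states share the \emph{same} continuation law --- that of $X_0$. This is visible from Table~\ref{table::PandR}: whenever $\pi$ prescribes \ad the game jumps to $X_0$'s distribution, and the same holds for an \ov taken at a state with $\sa=\sh+1$. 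Consequently, conditioned on being at a regeneration epoch, the future is independent of the past and is distributed as the process started afresh from $X_0$.

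First I would make the cycle decomposition precise. Writing $0=\tau_0<\tau_1<\tau_2<\cdots$ for the successive regeneration epochs and $R^{k,i}:=\sum_{t=\tau_{i-1}+1}^{\tau_i} r^k_t(\pi)$, the strong Markov property together with the common continuation law implies that the cycle lengths $\{\tau_i-\tau_{i-1}\}_{i\ge1}$ and the per-cycle reward vectors $\{(R^{1,i},R^{2,i})\}_{i\ge1}$ form an i.i.d.\ sequence, each distributed as $(\tau_1,R^{1,1},R^{2,1})$. This i.i.d.\ structure is exactly the hypothesis needed for the renewal--reward theorem.

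The step I expect to be the main obstacle, and the one that genuinely uses the truncation, is establishing $0<\E[\tau_1]<\infty$, i.e.\ that the reset states are positively recurrent. Because $\mdp\rh^{T_0}$ caps both $\sa$ and $\sh$ at $T_0$, the reachable state space under $\pi$ is \emph{finite}, so standard finite-Markov-chain theory gives finite expected return times to any recurrent state; the crux is thus to verify that a reset state lies in the recurrent class reached from $X_0$. Here I would argue that the boundary states $\{\max\{\sa,\sh\}=T_0\}$ force the action \ad, which is itself a reset, so reset states cannot be avoided on the boundary; the ``no waiting forever'' stipulation (finite expected time to the next non-null action) then rules out the only degenerate alternative, an infinite run of \wa that never approaches the boundary. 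Granting positive recurrence, finiteness of $\E[R^{k,1}]$ is immediate since the per-step reward is bounded by $T_0+1$ (as $\sh\le T_0$), whence $\E|R^{k,1}|\le(T_0+1)\,\E[\tau_1]<\infty$.

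With these ingredients the renewal--reward theorem yields $\lim_{T\to\infty}\tfrac1T\sum_{t=1}^T r^k_t(\pi)=\E[R^{k,1}(\pi)]/\E[\tau_1]$ almost surely for $k=1,2$, which is the rightmost equality in~(\ref{eq::strong_law_applies2}). The middle equality follows because this almost-sure limit is a deterministic constant, so its expectation equals itself; the boundedness of the partial-average rewards justifies the interchange of expectation and limit. Finally, the statement for $w\rh$ is pure linearity: writing $w\rh(r_t)=(1-\rho)\,r^1_t-\rho\,r^2_t$ and applying the $k=1,2$ result termwise gives $\lim_{T\to\infty}\tfrac1T\sum_{t=1}^T w\rh(r_t(\pi))=\bigl((1-\rho)\,\E[R^{1,1}(\pi)]-\rho\,\E[R^{2,1}(\pi)]\bigr)/\E[\tau_1]$ almost surely, matching~(\ref{eq::strong_law_applies}).
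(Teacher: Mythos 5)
Your proposal is correct and shares the paper's overall framework: both identify the reset epochs as regeneration times, decompose the trajectory into i.i.d.\ cycles, and invoke the renewal--reward/Markov-chain SLLN to get the ratio $\E[R^{k,1}]/\E[\tau_1]$. Where you genuinely diverge is in the crucial step of establishing positive recurrence of the reset states. You lean on the truncation: the state space of $\mdp\rh^{T_0}$ is finite and the boundary $\max\{\sa,\sh\}=T_0$ forces \ad, so finite-chain theory gives finite expected return times. The paper instead argues without ever using the truncation: the difference $\sh-\sa$ behaves like a random walk with drift toward the honest network, so after its (finite-expectation) last visit to the origin the attacker is permanently behind, only \ad and \wa remain feasible, and the no-infinite-waiting stipulation forces an adoption in finite expected time. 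Your route is more elementary for the lemma as stated, but the paper's truncation-free argument is not an accident: the same recurrence and cycle formula are later applied to the \emph{untruncated} $\mdp\rh$ and to $N^T\rh$ (see the proof of Proposition~\ref{prop::trunerr} and Corollary~\ref{cor::trunerr1}, which explicitly notes that the appendix proofs ``did not use the truncation''), so your version would need the drift argument anyway to support those downstream uses. One small imprecision worth fixing: the ``only degenerate alternative'' to hitting the boundary is not just an infinite run of \wa --- the \ma/\wa transitions out of an $active$ state can send the process back toward the interior (e.g.\ $(\sa,\sh,active)\to(\sa-\sh,1,relevant)$), so interior cycles that never reset are conceivable a priori. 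The clean way to close this in your finite-chain framework is to note that from every state there is a positive-probability path that either triggers a reset or monotonically grows $\sa$ to the boundary, so every recurrent class reachable from $X_0$ must contain a reset state.
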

\commentin{\begin{proof} %[of Lemma~\ref{lem::strong_law_applies}]
Define by $\mathcal{C^\pi}$ the states reachable from state $s_0:=(1,0,irrelevant)$, when $\pi$ is employed. We will show that $\mathcal{C^\pi}$ is an irreducible positive recurrent Markov chain.
For any state $X$ in $\mathcal{C^\pi}$ it must be that the waiting time for the next visit of $s_0$ has finite expectation: Assume that after $T'$ steps the honest network created $M(T')$ blocks and the attacker $m(T')$. If $M(T')>m(T')$ then as long as the player does not adopt $\sh-\sa=M(T')-m(T')$; this is regardless of other actions which the attacker possibly made in the past. As block creations are \emph{i.i.d}, the process $Y(T')=M(T')-m(T')$ is equivalent to a random walk on $\mathbb{Z}$ with a positive drift, hence the expected time of the \emph{last} time it returns to the origin is finite. After which the only action the attacker can make is \ad and \wa. As our model does not allow for pathological strategies in which the attacker waits for periods of infinite expected length, the next adoption occurs in finite expected time.
Finally, every adoption leads to $X_0$ with probability $\alpha$, thus the next return to $X_0$ is of finite expectation. This state is thus positive recurrent.
We conclude that $\mathcal{C^\pi}$ consists of a single communicating class (the finite expectation of the return implies the existence of a $t$ for which there's a positive probability to return to $s_0$ within $t$ steps), hence that $\pi$ % Remembering that $s_0$ was deterministically predefined, we conclude that $\pi$
induces a single irreducible Markov chain $\mathcal{C^\pi}$, which is also positive recurrent, as $s_0$ is.
%When state $(1,0,irrelevant)$ is visited, there's a positive probability to return to it in the next step, by our assumption on $\pi(1,0,irrelevant)$ (recall that an \ad action, which resets the attack, leads to either $(0,1,irrelevant)$ or $(1,0,irrelevant)$). Thus $\mathcal{C^\pi}$ has an aperiodic state, implying it's aperiodic.
We can thus use The Strong Law of Large Numbers for Markov chains (see, e.g.,~\cite{SERFOZO} pg. 50, Corollary 79%Theorem~8.10 in http://web.stanford.edu/class/cme308/OldWebsite/notes/chap8.pdf.}) to arrive at~(\ref{eq::strong_law_applies
) to arrive at~(\ref{eq::strong_law_applies2}) and~(\ref{eq::strong_law_applies}). The right-hand side equality in~(\ref{eq::strong_law_applies2}) follows from the SLLN applied to renewal reward processes. %(e.g. pg pp 106-7 in  http://www.google.co.il/url?sa=t&rct=j&q=&esrc=s&source=web&cd=1&cad=rja&uact=8&ved=0CCEQFjAA&url=http%3A%2F%2Fwww.springer.com%2Fcda%2Fcontent%2Fdocument%2Fcda_downloaddocument%2F9783540893318-c1.pdf%3FSGWID%3D0-0-45-672328-p173868334&ei=QJJnVf_0G_CR7Aa1poKgBQ&usg=AFQjCNEeOImJ4XvB9xg6kchg81aOdS6ZnA&sig2=9drTAVyyh57V_dS0PAQHYw&bvm=bv.93990622,d.ZGU) or here www.columbia.edu/~ks20/stochastic-I/stochastic-I-RP.pdf
\qed\end{proof}
}

		%==================================
		\commentout{see serfozo2009basics.
		 , pg 50, corollary 79. Theorem~8.10 in http://web.stanford.edu/class/cme308/OldWebsite/notes/chap8.pdf.}
		%==================================
The following are immediate corollaries of the strong law above:
\begin{corollary}\label{cor::expectation_commute_limit}
For any admissible policy $\pi$ of $\mdp\rh^{T_0}$,
\begin{align}&
\label{eq::expectation_commute_limit}
& \rev(\pi)= \frac{\lim\limits_{T\rightarrow\infty}\frac1T\sum_{t=1}^Tr^1_t}{\lim\limits_{T\rightarrow\infty}\frac1T\sum_{t=1}^T\left(r^1_t+r^2_t\right)} = \frac{\E\left[R^{1,1}\right]}{\E\left[R^{1,1}\right]+\E\left[R^{2,1}\right]} %\frac{\E\left[\lim\limits_{T\rightarrow\infty}\frac1T\sum_{t=1}^Tr^1_t\right]}{\E\left[\lim\limits_{T\rightarrow\infty}\frac1T\sum_{t=1}^T\left(r^1_t+r^2_t\right)\right]} =
\quad (a.s.)
\end{align}
\end{corollary}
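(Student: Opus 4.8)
The plan is to deduce the corollary directly from the two almost-sure convergence statements of Lemma~\ref{lem::strong_law_applies} by an elementary quotient-of-limits argument, so that both the $\liminf$ and the outer expectation in the definition of $\rev(\pi)$ collapse. The first step is to rewrite the ratio appearing inside $\rev(\pi)$ in the normalized form $\frac{\frac1T\sum_{t=1}^T r^1_t}{\frac1T\sum_{t=1}^T (r^1_t+r^2_t)}$, simply by dividing numerator and denominator by $T$. Lemma~\ref{lem::strong_law_applies} applied with $k=1$ gives that the numerator converges almost surely to $\E[R^{1,1}]/\E[\tau_1]$; applying it with $k=1$ and $k=2$ and summing (the intersection of two probability-one events is again a probability-one event) gives that the denominator converges almost surely to $(\E[R^{1,1}]+\E[R^{2,1}])/\E[\tau_1]$.

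The second step is the quotient law for limits. Once the numerator and the denominator each converge almost surely to a finite limit, and the limit of the denominator is strictly positive, the ratio converges almost surely to the ratio of the limits. The common factor $\E[\tau_1]$ cancels, yielding $\frac{\E[R^{1,1}]}{\E[R^{1,1}]+\E[R^{2,1}]}$ almost surely. Since this is a genuine (a.s.) limit, the $\liminf$ in the definition of $\rev(\pi)$ coincides with it; and because the limit is a deterministic constant, the outer expectation is vacuous. This establishes both equalities in~(\ref{eq::expectation_commute_limit}) simultaneously.

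The one point requiring care — the main obstacle, such as it is — is verifying that the denominator's limit is nonzero, which is precisely what licenses the quotient law. This reduces to checking $\E[R^{1,1}]+\E[R^{2,1}]>0$. Here I would argue that each renewal cycle must contain at least one reward-bearing action: the only ways the game returns to a renewal state are through an \ad (with reward $(0,\sh)$, $\sh\ge 1$) or a resetting \ov (with reward $(\sh+1,0)$), so the committed common chain advances by at least one block per cycle and $R^{1,1}+R^{2,1}\ge 1$ holds deterministically. Hence $\E[R^{1,1}]+\E[R^{2,1}]\ge 1>0$. The exclusion of pathological \wa-forever strategies, already imposed in the model and invoked in Lemma~\ref{lem::strong_law_applies} to guarantee $\E[\tau_1]<\infty$, ensures these per-cycle quantities are finite as well. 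With positivity of the denominator secured, the remaining manipulations are the routine ratio and $\liminf$/$\lim$ identifications described above.
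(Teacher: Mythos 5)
Your argument is correct and is essentially the paper's: the corollary is stated there as an immediate consequence of Lemma~\ref{lem::strong_law_applies}, obtained exactly by your quotient-of-limits step with the common factor $\E[\tau_1]$ cancelling (the paper secures positivity of the denominator elsewhere, in the proof of Lemma~\ref{lem::slope_of_rev}, by noting the main chain grows by at least half a block per round). One small overclaim: a cycle that starts at $(1,0)$ and immediately performs \ad has reward $(0,0)$, so $R^{1,1}+R^{2,1}\ge 1$ is not deterministic; but since with probability $1-\alpha$ the cycle starts at $(0,1)$, from which every cycle-terminating action pays at least one block, one still gets $\E\left[R^{1,1}\right]+\E\left[R^{2,1}\right]\ge 1-\alpha>0$ and your conclusion stands.
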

%\commentin{\begin{proof} %[of Lemma~\ref{lem::expectation_commute_limit}]
%We have, \begin{align}
%\nonumber&\rev(\pi)=\E\left[\liminf\limits_{T\rightarrow\infty} \frac{\sum_{t=1}^Tr^1_t}{\sum_{t=1}^T\left(r^1_t+r^2_t\right)}\right] =
%%\\ \nonumber&
%\E \left[ \liminf\limits_{T\rightarrow\infty}\frac{\frac{\sum_{t=1}^Tr^1_t}{T}}{\frac{\sum_{t=1}^Tr^1_t}{T}+\frac{\sum_{t=1}^Tr^2_t}{T}}\right] = \\
%\nonumber& \E \left[ \frac{\liminf\limits_{T\rightarrow\infty}\frac{\sum_{t=1}^Tr^1_t}{T}}{\liminf\limits_{T\rightarrow\infty}\frac{\sum_{t=1}^Tr^1_t}{T}+\liminf\limits_{T\rightarrow\infty}\frac{\sum_{t=1}^Tr^2_t}{T}}\right] = %\\ \nonumber &
%\frac{\lim\limits_{T\rightarrow\infty}\frac{\sum_{t=1}^Tr^1_t}{T}}{\lim\limits_{T\rightarrow\infty}\frac{\sum_{t=1}^Tr^1_t}{T}+\lim\limits_{T\rightarrow\infty}\frac{\sum_{t=1}^Tr^2_t}{T}} = \\ \nonumber &
%%\frac{\lim\limits_{T\rightarrow\infty}\E\left[\frac{\sum_{t=1}^Tr^1_t}{T}\right]}{\lim\limits_{T\rightarrow\infty}\E\left[\frac{\sum_{t=1}^Tr^1_t}{T}\right]+\lim\limits_{T\rightarrow\infty}\E\left[\frac{\sum_{t=1}^Tr^2_t}{T}\right]} =
%%\\ \nonumber &
%\frac{\E\left[\lim\limits_{T\rightarrow\infty}\frac1T\sum_{t=1}^Tr^1_t\right]}{\E\left[\lim\limits_{T\rightarrow\infty}\frac1T \sum_{t=1}^T\left(r^1_t +r^2_t\right)\right]}, \nonumber
%\end{align}
%where we used Lemma~\ref{lem::strong_law_applies} throughout to justify the existence (\emph{a.s.}) of the limits and arithmetic thereof.\qed\end{proof}\end{proof}
%}
\begin{corollary}\label{cor::second_cor}
Let $\pi$ and $\pi'$ be two policies.
\begin{enumerate}
\item If $(1-\alpha)\cdot \E\left[R^{1,1}\right]-\alpha\cdot \E\left[R^{2,1}\right]\ge0$, then $\pi$ dominates honest-mining.
\item If $(1-\rev(\pi))\cdot \E\left[R^{1,1}(\pi')\right]-\rev(\pi)\cdot \E\left[R^{2,1}(\pi')\right]>0$, then $\pi'$ dominates $\pi$.
\end{enumerate}
Both assertions become strict together with the inequalities.
\end{corollary}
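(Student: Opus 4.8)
The plan is to obtain both parts as one-line rearrangements of the closed-form expression for relative revenue supplied by Corollary~\ref{cor::expectation_commute_limit}, namely $\rev(\pi)=\E[R^{1,1}]/(\E[R^{1,1}]+\E[R^{2,1}])$. Once this identity is in hand, each claim amounts to clearing a single positive denominator and reading off the resulting inequality, so essentially no new machinery is needed beyond the strong law of Lemma~\ref{lem::strong_law_applies}.

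The only preliminary I would verify is that the denominator is strictly positive. Rewards lie in $\mathbb{N}^2$, so $R^{1,1}$ and $R^{2,1}$ are nonnegative; and over a renewal period the accepted main chain necessarily advances (an admissible policy may not wait forever, and each renewal is closed by an \ad or \ov step), so at least one reward coordinate is strictly positive in expectation. Hence $\E[R^{1,1}]+\E[R^{2,1}]>0$, which legitimizes the cross-multiplications below and guarantees they preserve the direction of the inequalities.

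For Part~1, honest mining satisfies $\rev(\hon)=\alpha$ in our model, so ``$\pi$ dominates honest mining'' means $\rev(\pi)\ge\alpha$. Substituting the closed form and multiplying by the positive denominator, $\rev(\pi)\ge\alpha$ is equivalent to $\E[R^{1,1}]\ge\alpha(\E[R^{1,1}]+\E[R^{2,1}])$, and collecting terms yields exactly $(1-\alpha)\cdot\E[R^{1,1}]-\alpha\cdot\E[R^{2,1}]\ge0$, the stated hypothesis. For Part~2, set $\rho:=\rev(\pi)$ and apply the closed form to $\pi'$: the assertion $\rev(\pi')>\rev(\pi)=\rho$ becomes, after clearing the positive denominator, $(1-\rho)\cdot\E[R^{1,1}(\pi')]-\rho\cdot\E[R^{2,1}(\pi')]>0$, which is the hypothesis verbatim once $\rho=\rev(\pi)$ is substituted. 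In both parts the strict and non-strict versions correspond to strict and non-strict hypotheses, since cross-multiplying by a positive quantity carries ``$>$'' to ``$>$'' and ``$\ge$'' to ``$\ge$''.

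I do not anticipate a genuine obstacle: all the probabilistic content is absorbed into Corollary~\ref{cor::expectation_commute_limit} (hence into the SLLN for the induced positive-recurrent chain), and what remains is bookkeeping. The single point meriting care is the positivity of $\E[R^{1,1}]+\E[R^{2,1}]$, which must be established before any cross-multiplication so that the chain of equivalences — and the preservation of inequality direction — is valid.
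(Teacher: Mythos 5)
Your proposal is correct and follows exactly the route the paper intends: the paper gives no explicit proof, presenting the corollary as ``immediate'' from Lemma~\ref{lem::strong_law_applies} via the closed form $\rev(\pi)=\E[R^{1,1}]/(\E[R^{1,1}]+\E[R^{2,1}])$ of Corollary~\ref{cor::expectation_commute_limit}, which is precisely the identity you rearrange. Your explicit check that the denominator is strictly positive (the paper later notes $\E\bigl[\lim_{T\to\infty}\frac1T\sum_{t=1}^T(r^1_t+r^2_t)\bigr]\ge 1/2$ in the proof of Lemma~\ref{lem::slope_of_rev}) is a worthwhile detail the paper leaves implicit.
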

	
The following lemma states that an optimal policy in $\mdp^{T_0}\rh$, whose value is small enough, is approximately optimal in $\mdp$, if only truncated policies are considered:
\begin{lemma}\label{lem::slope_of_rev}
Let $\rho\in[0,1]$, $\epsilon>0$, and $T_0\in\mathbb{N}$. If $\pi\in A^{T_0}$ is optimal in $\mdp^{T_0}\rh$ and $|v^\pi\rh|<\epsilon/2$, then
\begin{enumerate}
\item $\big|\rho-\rev(\pi)\big|<\epsilon$ \label{eq::rho_approx_rev}
\item $\big|\rho-\max_{\pi'\in A^{T_0}}\left\{\rev(\pi')\right\} \big|<\epsilon$ \label{eq::rho_approx_max}
\end{enumerate}
\end{lemma}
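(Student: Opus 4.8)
The plan is to reduce both parts to a single linear identity tying together $v^\pi\rh$, $\rev(\pi)$, and $\rho$, and then to control one geometric prefactor. Write $a:=\E[R^{1,1}(\pi)]$, $b:=\E[R^{2,1}(\pi)]$ and $\tau:=\E[\tau_1]$. Lemma~\ref{lem::strong_law_applies} gives $v^\pi\rh=\frac{(1-\rho)\cdot a-\rho\cdot b}{\tau}$, while Corollary~\ref{cor::expectation_commute_limit} gives $\rev(\pi)=\frac{a}{a+b}$. Substituting $a=\rev(\pi)\cdot(a+b)$ and $b=(1-\rev(\pi))\cdot(a+b)$ into the numerator, a one-line computation collapses it to $(a+b)\cdot(\rev(\pi)-\rho)$, yielding the key identity $v^\pi\rh=\frac{a+b}{\tau}\cdot(\rev(\pi)-\rho)$, equivalently $\rev(\pi)-\rho=v^\pi\rh\cdot\frac{\tau}{a+b}$. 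Hence $\big|\rho-\rev(\pi)\big|=|v^\pi\rh|\cdot\frac{\tau}{a+b}$, and since we are given $|v^\pi\rh|<\epsilon/2$, part~\ref{eq::rho_approx_rev} follows once the prefactor $\frac{\tau}{a+b}$ is bounded by $2$.

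The crux is therefore the bound $\frac{\tau}{a+b}\le\frac{1}{1-\alpha}$, i.e. $a+b\ge(1-\alpha)\cdot\tau$, which should hold for every stationary policy whenever $\alpha<\tfrac12$. I would prove it by tracking the length $P_t$ of the longest chain known to the honest network. Since honest nodes always extend the longest chain they see, $P_t$ rises by exactly $1$ at every honest block creation; an \ov (which, by Appendix~\ref{appendix::model}, publishes exactly $\sh+1$ blocks over a public chain of length $\sh$) also raises $P_t$ by $1$, while \wa, \ma and \ad leave it unchanged. Thus $P_T$ is non-decreasing and dominates the number of honest creations up to time $T$, which by the SLLN grows like $(1-\alpha)\cdot T$. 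Because renewal times have finite expectation, the agreed, reward-bearing chain lags behind $P_T$ only by a bounded amount, so by Lemma~\ref{lem::strong_law_applies} the acceptance rate satisfies $\frac{a+b}{\tau}=\lim_T\frac1T\sum_{t\le T}(r^1_t+r^2_t)=\lim_T\frac{P_T}{T}\ge1-\alpha$. This is the step I expect to be the main obstacle, since it requires arguing carefully that accepted blocks keep pace with $P_T$ despite overrides discarding honest ones.

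Granting this, part~\ref{eq::rho_approx_rev} is immediate: $\big|\rho-\rev(\pi)\big|=|v^\pi\rh|\cdot\frac{\tau}{a+b}<\frac{\epsilon}{2}\cdot\frac{1}{1-\alpha}\le\epsilon$ for $\alpha<\tfrac12$. For part~\ref{eq::rho_approx_max} I would apply the same identity to an arbitrary $\pi'\in A^{T_0}$: writing $a',b',\tau'$ for its renewal quantities, $\rev(\pi')-\rho=v^{\pi'}\rh\cdot\frac{\tau'}{a'+b'}$ with $\frac{\tau'}{a'+b'}\in[1,\frac1{1-\alpha}]$ (the lower bound because accepted blocks never exceed created ones). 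Since $\pi$ is optimal in $\mdp^{T_0}\rh$, $v^{\pi'}\rh\le v^\pi\rh<\epsilon/2$. If $v^{\pi'}\rh\ge0$ then $\rev(\pi')-\rho\le v^{\pi'}\rh\cdot\frac1{1-\alpha}<\frac{\epsilon/2}{1-\alpha}\le\epsilon$; if $v^{\pi'}\rh<0$ then $\rev(\pi')-\rho<0<\epsilon$. Either way $\rev(\pi')<\rho+\epsilon$, and since $A^{T_0}$ is finite this gives $\max_{\pi'\in A^{T_0}}\rev(\pi')<\rho+\epsilon$. Combining with part~\ref{eq::rho_approx_rev}, which shows $\rev(\pi)>\rho-\epsilon$ for the admissible $\pi\in A^{T_0}$ so that the maximum exceeds $\rho-\epsilon$, yields $\big|\rho-\max_{\pi'\in A^{T_0}}\rev(\pi')\big|<\epsilon$, completing the proof.
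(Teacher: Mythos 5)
Your proof is correct and follows the same skeleton as the paper's: both rest on the renewal--reward identity $v^\pi\rh=\frac{a+b}{\tau}\cdot(\rev(\pi)-\rho)$ (the paper writes this as $v^\pi\rh=\E[\lim\frac1T\sum r^1_t]-\rho\cdot\E[\lim\frac1T\sum(r^1_t+r^2_t)]$ combined with Corollary~\ref{cor::expectation_commute_limit}), and both reduce everything to a lower bound on the block-acceptance rate $\frac{a+b}{\tau}$; Part~2 is then the optimality of $v^\pi\rh$ applied to an arbitrary $\pi'$, which you phrase directly through the identity and the paper phrases as a contradiction --- these are the same argument. The one genuine difference is the step you correctly flag as the crux. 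The paper bounds the acceptance rate below by $1/2$ via a purely combinatorial injection: every block excluded from the main chain by an \ov is matched to a distinct overriding block that is included, so at most half of all created blocks can be discarded. You instead track the public chain length $P_T$, observe that it increments at every honest creation (and at every \ov), and that the agreed chain lags $P_T$ by only the current above-fork portion $\sh_T$ (which is $O(1)$ in expectation, and bounded by $T_0$ in $A^{T_0}$), giving the rate bound $1-\alpha$. Your bound is sharper for $\alpha<\tfrac12$ and your accounting of $P_T$ is sound, but note two small points: (i) your conclusion needs $\frac{1}{1-\alpha}\le 2$, i.e.\ $\alpha\le\tfrac12$, whereas the paper's $1/2$ bound is policy- and $\alpha$-free (this is harmless, since the entire apparatus --- e.g.\ the positive-drift random walk in Lemma~\ref{lem::strong_law_applies} --- already presumes $\alpha<\tfrac12$); and (ii) in Part~2 the appeal to finiteness of $A^{T_0}$ is unnecessary, since your bound $\rev(\pi')\le\rho+\frac{\epsilon/2}{1-\alpha}$ is uniform over $\pi'$ with a uniform gap below $\rho+\epsilon$.
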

\commentin{\begin{proof}
Observe that $\lim\limits_{T\rightarrow\infty}\frac1T\sum_{t=1}^T\left(r^1_t\left(\pi'\right)+ r^2_t\left(\pi'\right)\right)$ represents the average number of blocks added to the agreed pubic chain (aka main chain), per round, when $\pi'$ is deployed. Under the honest strategy, this rate equals 1, as every round accounts for the addition of a new block (see Section~\ref{sec::model}). On the other hand, no positive recurrent
strategy can more than halve the growth rate of the main chain: For every block that is overridden and excluded from the main chain there's a corresponding overriding block is included in it (see also the proof of Proposition~\ref{prop::upperbound}).\footnote{This assumption is without loss of generality, as at some point the player would need to adopt, and the waiting time for it is finite in expectation. See the proof of Lemma~\ref{lem::strong_law_applies}.\label{footnote::positiverecurrent}}
Thus, $\E\left[\lim\limits_{T\rightarrow\infty}\frac1T\sum_{t=1}^T\left(r^1_t\left(\pi'\right)+ r^2_t\left(\pi'\right)\right)\right]\ge 1/2$.

\noindent \emph{Part I:} Relying on Lemma~\ref{lem::strong_law_applies} we can manipulate the limits to obtain
\begin{align}\label{eq::value_and_rev}
&\epsilon/2>v^\pi\rh = \E\left[\liminf\limits_{T\rightarrow\infty}\frac1T\sum_{t=1}^T w_{\rho}(r^1_t\left(\pi\right),r^2_t\left(\pi\right))\right] = \\ \nonumber &\E\left[ \liminf\limits_{T\rightarrow\infty}\frac1T\sum_{t=1}^T(1-\rho)\cdot r^1_t\left(\pi\right)-\rho\cdot r^2_t\left(\pi\right)\right] = \\\nonumber  &
\E\left[\lim\limits_{T\rightarrow\infty}\frac1T\sum_{t=1}^T r^1_t\left(\pi\right)\right]-\rho\cdot\E\left[\lim\limits_{T\rightarrow\infty}\frac1T\sum_{t=1}^T\left(r^1_t\left(\pi\right)+ r^2_t\left(\pi\right)\right)\right].
\end{align}
Using Corollary~\ref{cor::expectation_commute_limit} we obtain
\begin{align*}
&\rev(\pi)=\frac{\E\left[\lim\limits_{T\rightarrow\infty}\frac1T\sum_{t=1}^T r^1_t\left(\pi\right)\right]}{\E\left[\lim\limits_{T\rightarrow\infty}\frac1T\sum_{t=1}^T\left(r^1_t\left(\pi\right)+ r^2_t\left(\pi\right)\right)\right]} < \\ \nonumber&
\rho+\frac{\epsilon/2}{\E\left[\lim\limits_{T\rightarrow\infty}\frac1T\sum_{t=1}^T\left(r^1_t\left(\pi\right)+ r^2_t\left(\pi\right)\right)\right]}  \le \rho+\epsilon.
\end{align*}

Similarly, $v^\pi\rh >-\epsilon/2$ implies
\begin{align*}
&\rev(\pi) > \rho-\frac{\epsilon/2}{\E\left[\lim\limits_{T\rightarrow\infty}\frac1T\sum_{t=1}^T\left(r^1_t\left(\pi\right)+ r^2_t\left(\pi\right)\right)\right]}  \ge \rho-\epsilon,
\end{align*}

which concludes the first part.

\noindent \emph{Part II:} We use here the same technique as previously. Assume by negation that for some policy $\pi'\in A^{T_0}$, $\rev(\pi')\ge \rho+\epsilon$.
Then, similar to the previous article, we have
\begin{align}
\label{eq::epsilon_half0}
&\frac{\E\left[\lim\limits_{T\rightarrow\infty}\frac1T\sum_{t=1}^T r^1_t\left(\pi'\right)\right]}{\E\left[\lim\limits_{T\rightarrow\infty}\frac1T\sum_{t=1}^T\left(r^1_t\left(\pi'\right)+ r^2_t\left(\pi'\right)\right)\right]} =\rev(\pi')\ge \rho+\epsilon \Longrightarrow \\
&v^{\pi'}\rh=\E\left[\lim\limits_{T\rightarrow\infty}\frac1T\sum_{t=1}^T r^1_t\left(\pi'\right)\right]-\rho\cdot\E\left[\lim\limits_{T\rightarrow\infty}\frac1T\sum_{t=1}^T\left(r^1_t\left(\pi'\right)+ r^2_t\left(\pi'\right)\right)\right]\ge
\nonumber\\
&\left(\E\left[\lim\limits_{T\rightarrow\infty}\frac1T\sum_{t=1}^T\left(r^1_t\left(\pi'\right)+ r^2_t\left(\pi'\right)\right)\right]\right)\cdot\epsilon \ge 1/2\cdot\epsilon > v^\pi\rh,
\label{eq::epsilon_half2}
\end{align}
which contradicts the optimality of $v^\pi\rh$. This proves that $\rho>\max_{\pi'\in A^{T_0}}\left\{\rev(\pi)\right\}$ $-\epsilon$. On the other hand, assume in negation that $\rev(\pi)\le\rho-\epsilon$. We then have,
\begin{align*}
&\frac{\E\left[\lim\limits_{T\rightarrow\infty}\frac1T\sum_{t=1}^T r^1_t\left(\pi\right)\right]}{\E\left[\lim\limits_{T\rightarrow\infty}\frac1T\sum_{t=1}^T\left(r^1_t\left(\pi\right)+ r^2_t\left(\pi\right)\right)\right]} \le \rho-\epsilon \Longrightarrow \\
& v^{\pi}\rh=\E\left[\lim\limits_{T\rightarrow\infty}\frac1T\sum_{t=1}^T r^1_t\left(\pi\right)\right]-\rho\cdot\E\left[\lim\limits_{T\rightarrow\infty}\frac1T\sum_{t=1}^T\left(r^1_t\left(\pi\right)+ r^2_t\left(\pi\right)\right)\right]\le
\\
&\left(\E\left[\lim\limits_{T\rightarrow\infty}\frac1T\sum_{t=1}^T\left(r^1_t\left(\pi\right)+ r^2_t\left(\pi\right)\right)\right]\right)\cdot-\epsilon \le 1/2\cdot(-\epsilon) < v^\pi\rh,
\end{align*}
and we arrive again at a contradiction. Therefore, $\rev(\pi)>\rho-\epsilon$, hence $\max_{\pi'\in A^{T_0}}\left\{\rev(\pi')\right\}>\rho-\epsilon$.
\qed\end{proof}
}
\begin{corollary}\label{cor::almost_opt}
If $\pi$ is $\epsilon/4$-optimal in $\mdp^{T_0}\rh$ and $|v^\pi\rh|<\epsilon/4$, then the inequalities guaranteed by Lemma~\ref{lem::slope_of_rev} hold. % for $\pi$ and $\epsilon$.
\end{corollary}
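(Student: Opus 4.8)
The plan is to reduce the corollary to Lemma~\ref{lem::slope_of_rev} by exploiting the fact that the lemma's proof factors into two pieces that rely on its two hypotheses separately. The two relaxations here—$\epsilon/4$-optimality in place of exact optimality, and the value bound $|v^\pi\rh|<\epsilon/4$ in place of $|v^\pi\rh|<\epsilon/2$—each suffice for the piece they touch, so the whole argument goes through with the constants chosen to add up correctly.

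First I would observe that the conclusion $\big|\rho-\rev(\pi)\big|<\epsilon$ (item~\ref{eq::rho_approx_rev}) never uses optimality at all. Part~I of the proof of Lemma~\ref{lem::slope_of_rev} derives it purely from $|v^\pi\rh|<\epsilon/2$, via Corollary~\ref{cor::expectation_commute_limit} and the fact that the main-chain growth rate $\E\left[\lim\frac1T\sum(r^1_t+r^2_t)\right]\ge 1/2$. Since our hypothesis $|v^\pi\rh|<\epsilon/4$ is strictly stronger than $|v^\pi\rh|<\epsilon/2$, item~\ref{eq::rho_approx_rev} carries over verbatim, with no reference to how close $\pi$ is to optimal.

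Next, for item~\ref{eq::rho_approx_max}, I would split $\big|\rho-\max_{\pi'\in A^{T_0}}\rev(\pi')\big|<\epsilon$ into its two one-sided bounds. The lower bound is immediate: $\max_{\pi'\in A^{T_0}}\rev(\pi')\ge\rev(\pi)>\rho-\epsilon$ by the item just established. For the upper bound I would argue by contradiction exactly as in Part~II of the lemma: if some $\pi'\in A^{T_0}$ had $\rev(\pi')\ge\rho+\epsilon$, then the manipulation in~(\ref{eq::epsilon_half0})--(\ref{eq::epsilon_half2}) forces $v^{\pi'}\rh\ge\epsilon/2$. The sole place where the original proof invoked exact optimality was to contradict this inequality. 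Here I instead use the $\epsilon/4$-optimality of $\pi$, which gives $v^{\pi'}\rh\le v^\pi\rh+\epsilon/4$, combined with $v^\pi\rh<\epsilon/4$, to obtain $v^{\pi'}\rh<\epsilon/2$—the desired contradiction. Hence $\max_{\pi'\in A^{T_0}}\rev(\pi')<\rho+\epsilon$ as well.

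The argument is essentially bookkeeping, so there is no deep obstacle; the one thing to be careful about is the accounting of the two $\epsilon/4$ slacks. They are calibrated precisely so that the $\epsilon/4$ optimality gap and the $\epsilon/4$ value bound sum to exactly the $\epsilon/2$ threshold on $v^{\pi'}\rh$ that the revenue inequality $\rev(\pi')\ge\rho+\epsilon$ produces. Checking that these constants meet at $\epsilon/2$—rather than leaving a stray factor that would weaken the contradiction—is the only step where a sign or factor slip could creep in.
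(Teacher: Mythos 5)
Your proposal is correct and follows essentially the same route as the paper: item~\ref{eq::rho_approx_rev} is noted to be independent of optimality, and item~\ref{eq::rho_approx_max} rests on the same accounting, namely that the $\epsilon/4$ optimality gap plus the $\epsilon/4$ value bound keep the best achievable value of $\mdp^{T_0}\rh$ below $\epsilon/2$. The paper phrases this as ``apply Lemma~\ref{lem::slope_of_rev} to an optimal policy $\hat{\pi}$, whose value satisfies $|v^{\hat{\pi}}\rh|<\epsilon/2$,'' whereas you inline the Part~II contradiction directly, but the substance is identical.
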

\commentin{\begin{proof}
The first inequality holds for $\pi$, as in its proof we didn't use the assumption on $\pi$'s optimality. The second inequality is a property of $\rho$ (and not of the policy); it holds because $|v^\pi\rh|<\epsilon/4$ together with $\pi$ being $\epsilon/4$-optimal imply $|v^{\hat{\pi}}\rh|<\epsilon/2$, for an optimal policy $\hat{\pi}$.
\qed\end{proof}
}

	\commentout{\begin{lemma}\label{lem::noise_truncated}
	For any $\epsilon>0$, $T>T_0=T_0(\epsilon)$,\footnote{See line~\ref{line_truncation} in Algorithm~\ref{alg}.} and $\pi\in A$:
	$\rev(\pi^{T_0})>\rev(\pi)-\epsilon$.
	\end{lemma}
	\commentin{
	\begin{proof}
	We bound the gap between $\rev(\pi)$ and $\rev(\pi^T)$. By Lemma~\ref{lem::strong_law_applies}, it suffices to investigate this gap up to $\tau_1$.
	Clearly, $\pi$ and $\pi^{T}$ can defer only in states with $\sa+\sh=T$. Put $U:=\left\{(\sa,\sh):a+h=T\right\}$, $U_1:=\left\{(\sa,\sh):\sh-\sa\ge0\cdot T\right\}\cap U$ and $U_2:=U\setminus U_1$.
	
	Conditioned on reaching a state in $U_1$ before $\tau_1$, the probability that the process will ever strictly cross the diagonal $(\sa'=\sh')$ is at most $\left(\frac{\alpha}{1-\alpha}\right)^ {h-a+1}=\left(\frac{\alpha}{1-\alpha}\right)^{2\cdot h-T+1}$ (e.g., by a martingale method). %This thus serves as an upper bound for the probability of the event that the future untruncated process would have benefited the attacker.
	For every state $(\sa,\sh)\in U_1$, the probability to arrive at it is at most ${T \choose \sh}\cdot\alpha^{T-\sh}\cdot(1-\alpha)^{\sh}$. We assume below, without loss of generality, that $T$ is an even number. Therefore, the probability that the future untruncated process would have benefited the attacker is upper bounded by
	\begin{align}\nonumber
	& \sum_{\sh=T/2}^{T} {T \choose \sh}\cdot\alpha^{T-\sh}\cdot(1-\alpha)^{\sh}\cdot \left(\frac{\alpha}{1-\alpha}\right)^{ 2\cdot h-T+1} = \\ &
	\sum_{\sh=T/2+1}^{T} {T \choose \sh}\cdot\alpha^{\sh+1}\cdot(1-\alpha)^{T-\sh-1}<\nonumber\\&
	\sum_{\sh=T/2}^{T} {T \choose \sh}\cdot\alpha^{\sh}\cdot(1-\alpha)^{T-\sh}.\label{eq::for_chernoff0}
	\end{align}
	
	Note that the range of $\sh$ in the sum above follows from the definition of $U_1$.
	Let $Z_t$ be a random variable with $\Pr(Z_t=1)=1-\Pr(Z_t=0)=\alpha$, and let $Z^T=\sum_{t=1}^{T}Z_t$.
	The expected value of $Z^T$ is, clearly, $\alpha\cdot T$. Observe that the expression in~(\ref{eq::for_chernoff}) equals $\Pr\left(Z_T\ge T/2\right)$. We will make use of this later.
	
	We now turn our attention to bound the probability of arriving at $U_2$ within time $\tau_1$. Observe that for any $(\sa,\sh)\in U_2$, $\sa-\sh>0$, and similar to the previous argument we can upper bound the probability of arriving at $U_2$ by $\Pr(Z^T> T/2)$.
	
	Noticing that $\E\left[Z^T\right]=\alpha\cdot T$, we deduce from Chernoff's bound that \begin{align*}
	&\Pr\left(Z^T>T/2\right) = \Pr\left(Z^T>0.5/\alpha\cdot\E\left[Z^T\right]\right) = \\ &
	\Pr\left(Z^T>\left(1+\frac{0.5-\alpha}{\alpha}
	\right)\cdot \E\left[Z^T\right]\right)\le
	\left(\frac{e^{\frac{0.5-\alpha}{\alpha}}}{\left(0.5/\alpha\right)^{0.5/\alpha}}\right)^{\alpha\cdot T} = \\ &
	\left(\left(2\cdot\alpha\cdot e\right)^{\frac{0.5}{\alpha}}\cdot e^{-1}\right)^{\alpha\cdot T} =
	\left(2\cdot\alpha\cdot e\right)^{0.5\cdot T}\cdot e^{-\alpha\cdot T}
	\end{align*}

	\ \\ \ \\
	most
	\begin{align} \nonumber%\label{chernoff_bound_truncation}
	&\Pr\left(Y^T\ge -(1-c)\cdot T\right) = \Pr\left(2\cdot Z^T-T\ge -(1-c)\cdot T\right) =  \Pr\left(Z^T\ge c/2\cdot T\right)=\\ \nonumber
	&\Pr\left(Z^T\ge \frac{c}{2\cdot\alpha}\cdot\E\left[Z^T\right]\right)  = \Pr\left(Z^T\ge \left(1+\frac{c-2\cdot\alpha}{2\cdot\alpha}\right)\cdot\E\left[Z^T\right]\right)\le \\ \nonumber &
	\left(\frac{e^{\frac{c-2\cdot\alpha}{2\cdot\alpha}}}{\left(\frac{c}{2\cdot\alpha}\right)^{\frac{c}{2\cdot\alpha}}}\right)^{\alpha\cdot T}
	%\le e^{-\frac13\cdot\left(\frac{c}{2\cdot\alpha}\right)^2\cdot \alpha\cdot T} = \\ \nonumber
	%& e^{-\frac1{12}\cdot\frac{c^2}{\alpha}\cdot T}.
	\end{align}
	\begin{align} \nonumber%\label{chernoff_bound_truncation}
	&\Pr\left(Y^T\ge -(1-c)\cdot T\right) = \Pr\left(Y^T\ge \frac{1-c}{1-2\cdot \alpha}\cdot(2\cdot \alpha-1)\cdot T\right)=\\ \nonumber
	&\Pr\left(Y^T\ge \frac{1-c}{1-2\cdot \alpha}\cdot\E\left[Y^T\right]\right) = \Pr\left(Y^T\ge \left(1+\frac{2\cdot\alpha-c}{1-2\cdot \alpha}\right)\cdot\E\left[Y^T\right]\right) \\ \nonumber
	&\le
	\left(\frac{e^{\delta}}{(1+\delta)^{1+\delta}}\right)^{(2\cdot\alpha-1)\cdot T},
	\end{align}

	This term thus upper bounds the probability that the attacker loses from being forced to adopt in states that belong to $U_2$. Remembering that $\rev\le1$, we conclude that the potential loss of the attacker from truncating $\pi$ is at most
	\begin{align*} %\label{loss_from _truncated_finite_policy}
	& c\cdot T\cdot \left(\frac{\alpha}{1-\alpha}\right)^{(1-c)\cdot T+1} + e^{-\frac1{12}\cdot\frac{c^2}{\alpha}\cdot T} < \epsilon/2+\epsilon/2=\epsilon,
	\end{align*}
	which holds for any $T>T_0$.
	\qed\end{proof}
}

	\begin{proof}
	We bound the gap between $\rev(\pi)$ and $\rev(\pi^T)$. By Lemma~\ref{lem::strong_law_applies}, it suffices to investigate this gap up to $\tau_1$.
	Clearly, $\pi$ and $\pi^{T}$ can defer only in states with $\sa+\sh=T$. Fix some $0<c<1$. Put $U:=\left\{(\sa,\sh):a+h=T\right\}$, $U_1:=\left\{(\sa,\sh):\sh-\sa>(1-c)\cdot T\right\}\cap U$ and $U_2:=U\setminus U_1$.
	
	Conditioned on reaching a state in $U_1$ before $\tau_1$, the probability that the process will ever strictly cross the diagonal $(\sa'=\sh')$ is at most $\left(\frac{\alpha}{1-\alpha}\right)^ {h-a+1}=\left(\frac{\alpha}{1-\alpha}\right)^{2\cdot h-T+1}$ (e.g., by a martingale method). %This thus serves as an upper bound for the probability of the event that the future untruncated process would have benefited the attacker.
	For every state $(\sa,\sh)\in U_1$, the probability to arrive at it is at most ${T \choose \sh}\cdot\alpha^{T-\sh}\cdot(1-\alpha)^{\sh}$. Therefore, the probability that the future untruncated process would have benefited the attacker is upper bounded by
	\begin{align}\nonumber
	& \sum_{\sh=(1-c/2)\cdot T+1}^{T} {T \choose \sh}\cdot\alpha^{T-\sh}\cdot(1-\alpha)^{\sh}\cdot \left(\frac{\alpha}{1-\alpha}\right)^{ 2\cdot h-T+1} = \\ &
	\sum_{\sh=(1-c/2)\cdot T+1}^{T} {T \choose \sh}\cdot\alpha^{\sh+1}\cdot(1-\alpha)^{T-\sh-1}<\nonumber\\&
	\sum_{\sh=(1-c/2)\cdot T+1}^{T} {T \choose \sh}\cdot\alpha^{\sh}\cdot(1-\alpha)^{T-\sh}.\label{eq::for_chernoff}
	\end{align}
	
	Note that the range of $\sh$ in the sum above follows from the definition of $U_1$.
	Let $Z_t$ be a random variable with $\Pr(Z_t=1)=1-\Pr(Z_t=0)=\alpha$, and let $Z^T=\sum_{t=1}^{T}Z_t$.
	The expected value of $Z^T$ is, clearly, $\alpha\cdot T$. Observe that the expression in~(\ref{eq::for_chernoff}) equals $\Pr\left(Z_T>(1-c/2)\cdot T\right)$. We will make use of this later.
	
	We now turn our attention to bound the probability of arriving at $U_2$ within time $\tau_1$. Observe that for any $(\sa,\sh)\in U_2$, $\sa\ge c/2\cdot T$, and similar to the previous argument we can upper bound the probability of arriving at $U_2$ by $\Pr(Z^T\ge c/2\cdot T)$.
	\ \\ \ \\
	most
	\begin{align} \nonumber%\label{chernoff_bound_truncation}
	&\Pr\left(Y^T\ge -(1-c)\cdot T\right) = \Pr\left(2\cdot Z^T-T\ge -(1-c)\cdot T\right) =  \Pr\left(Z^T\ge c/2\cdot T\right)=\\ \nonumber
	&\Pr\left(Z^T\ge \frac{c}{2\cdot\alpha}\cdot\E\left[Z^T\right]\right)  = \Pr\left(Z^T\ge \left(1+\frac{c-2\cdot\alpha}{2\cdot\alpha}\right)\cdot\E\left[Z^T\right]\right)\le \\ \nonumber &
	\left(\frac{e^{\frac{c-2\cdot\alpha}{2\cdot\alpha}}}{\left(\frac{c}{2\cdot\alpha}\right)^{\frac{c}{2\cdot\alpha}}}\right)^{\alpha\cdot T}
	%\le e^{-\frac13\cdot\left(\frac{c}{2\cdot\alpha}\right)^2\cdot \alpha\cdot T} = \\ \nonumber
	%& e^{-\frac1{12}\cdot\frac{c^2}{\alpha}\cdot T}.
	\end{align}
	\begin{align} \nonumber%\label{chernoff_bound_truncation}
	&\Pr\left(Y^T\ge -(1-c)\cdot T\right) = \Pr\left(Y^T\ge \frac{1-c}{1-2\cdot \alpha}\cdot(2\cdot \alpha-1)\cdot T\right)=\\ \nonumber
	&\Pr\left(Y^T\ge \frac{1-c}{1-2\cdot \alpha}\cdot\E\left[Y^T\right]\right) = \Pr\left(Y^T\ge \left(1+\frac{2\cdot\alpha-c}{1-2\cdot \alpha}\right)\cdot\E\left[Y^T\right]\right) \\ \nonumber
	&\le
	\left(\frac{e^{\delta}}{(1+\delta)^{1+\delta}}\right)^{(2\cdot\alpha-1)\cdot T},
	\end{align}

	This term thus upper bounds the probability that the attacker loses from being forced to adopt in states that belong to $U_2$. Remembering that $\rev\le1$, we conclude that the potential loss of the attacker from truncating $\pi$ is at most
	\begin{align*} %\label{loss_from _truncated_finite_policy}
	& c\cdot T\cdot \left(\frac{\alpha}{1-\alpha}\right)^{(1-c)\cdot T+1} + e^{-\frac1{12}\cdot\frac{c^2}{\alpha}\cdot T} < \epsilon/2+\epsilon/2=\epsilon,
	\end{align*}
	which holds for any $T>T_0$.
	\qed\end{proof}
}

Finally, we are ready to prove the correctness of Algorithm~\ref{alg}:\\
\noindent {\bf Proposition~\ref{prop::correctness}:}\\
\emph{For any $T_0\in\mathbb{N}$ and $\epsilon>0$, Algorithm~\ref{alg} halts, and its output $(\rho,\pi)$ satisfies: $\big|\rho-\rev(\pi)\big|<\epsilon$ and $\big|\rho-\max_{\pi'\in A^T}\left\{\rev(\pi')\right\}\big|<\epsilon$.
}
\commentin{\begin{proof}
Observe that ${v\rh^T}^*$, the optimal value of $\mdp^{T_0}\rh$, is monotonically decreasing in $\rho$: If $\rho_1>\rho_2$ and $\pi_1$ is optimal in $\mdp^{T_0}_{\rho_1}$, then ${v^{T_0}_{\rho_2}}^*\ge v^{T_0,\pi_1}_{\rho_2}>v^{T_0,\pi_1}_{\rho_1}={v^{T_0}_{\rho_1}}^*$, where the strict inequality holds because $w\rh$ is strictly decreasing. %Also, ${v^{T_0}_0}^*>0$, for instance because the honest mining strategy gains the player a value of $\alpha$. On the other hand, ${v^{T_0}_1}^*<0$, because in $\mdp^{T_0}_1$ the player's profits for its blocks vanishes, and its revenue for blocks it adopts is negative (such events occur in finite time, in epxectation; see Lemma~\ref{lem::strong_law_applies}).
Furthermore, ${v^{T_0}\rh}^*$ is continuous in $\rho$, as $w\rh$ is.

Now, the quantity $(high-low)$ is halved at every iteration of the loop (lines~(\ref{line_binarys2}),(\ref{line_binarys4})), hence the number of iterations must be finite. To understand what we can say about $v$ when the algorithm halts and  $high-low<\epsilon/8$, we make use of loop invariants: First, we claim that for every value assigned to $low$ throughout the algorithm's run, the value returned by $mdp\_solver(\mdp_{low}^{T_0},\epsilon/8)$ is positive. Indeed, $low$ begins with a value of $0$. Honest mining gains the attacker a value of $\alpha$, in $\mdp_{0}^{T_0}$; $mdp\_solver(\mdp_{low}^{T_0},\epsilon/8)$ thus returns a positive value, assuming $\epsilon<8\cdot\alpha$. Any further alteration of $low$'s value, in line~\ref{line_binarys2}, is conditioned to satisfy this assertion.

Similarly, the value returned by $mdp\_solver(\mdp_{1}^{T_0},\epsilon/8)$ must be negative, since the attacker's profits for its blocks vanishes, and its revenue for blocks it adopts is negative (and such events occur in finite time, in epxectation; see Lemma~\ref{lem::strong_law_applies}). In addition, any new assignment to $high$ is conditioned to be non-positive, by line~\ref{line_binarys4}.

From the monotonocity and continuity of ${v^{T_0}\rh}^*$ we deduce that the root of ${v^{T_0}\rh}^*$ lies between $low$ and $high$. However, $high-low<\epsilon/8$ implies that $|v|<\epsilon/8$: Indeed, assume in negation that $v^\pi\rh\ge\epsilon/8$. Then
\begin{align}
& \epsilon/8\le v^{\pi}\rh=\E\left[\lim\limits_{T\rightarrow\infty}\frac1T\sum_{t=1}^T r^1_t\left(\pi\right)\right]-\rho\cdot\E\left[\lim\limits_{T\rightarrow\infty}\frac1T\sum_{t=1}^T\left(r^1_t\left(\pi\right)+ r^2_t\left(\pi\right)\right)\right]\le
\nonumber\\
& \E\left[\lim\limits_{T\rightarrow\infty}\frac1T\sum_{t=1}^T r^1_t\left(\pi\right)\right]-\left(\rho+\epsilon/8\right)\cdot\E\left[\lim\limits_{T\rightarrow\infty}\frac1T\sum_{t=1}^T\left(r^1_t\left(\pi\right)+ r^2_t\left(\pi\right)\right)\right] +\epsilon/8\\\nonumber &
\le {v^{T_0}_{\rho+\epsilon/4}}^* +\epsilon/8< {v^{T_0}_{high}}^*+\epsilon/8.
\end{align}
We used here the inequality $\E\left[\lim\limits_{T\rightarrow\infty}\frac1T\sum_{t=1}^T\left(r^1_t\left(\pi\right)+ r^2_t\left(\pi\right)\right)\right]\le 1$ (see the proof of Lemma~\ref{lem::slope_of_rev}), and the strict monotonicity of ${v^{T_0}\rh}^*$. This contradicts ${v^{T_0}_{high}}^*\le 0$. A similar derivation rules out the case $v^\pi\rh\le-\epsilon/8$, which holds as a loop invariant. We conclude that $|v|<\epsilon/8$.
\qed\end{proof}
}

\noindent{\bf Proposition~\ref{prop::trunerr}.}
\emph{For any $T\in\mathbb{N}$, if ${v\rh}^*\ge0$ then ${u^T\rh}^*\ge {v\rh}^*\ge  {v^T\rh}^*$. Moreover, these bounds are tight: $\lim\limits_{T\rightarrow\infty}{u^T\rh}^*-{v^T\rh}^*=0$.
}

We precede the proof of the proposition with some (fun!) probability analysis.
Denote by $\lt$ the set of states $\left\{(\sa,\sh) : \sa\ge\sh \right\}$. Fix some policy $\pi$ and a state $(\sa_0,\sh_0)$. Denote by $Y_t^\pi$ the random process defined by our game, where the initial state is $(\sa_0,\sh_0)$. Let $\psi$ be a stopping time defined by $\max\left\{t \;:\; Y_t^\pi\in\lt \right\}$. If $Y_{\psi}^\pi=(\sa_1,\sa_1)$ (observe that $Y_{\psi}^\pi$ must lie in the main diagonal), we denote $last(\sa_0,\sh_0):=\sa_1$. $last(\sa_0,\sh_0)$ represents the number of blocks the attacker (or the honest network, for the matter) has, before leaving $\lt$ for the rest of the epoch.

\begin{lemma}
For any state $(\sa,\sh_0)\in\lt$,
\begin{align}
\label{eq::return_time}\E\left[last(\sa_0,\sh_0)\right]&=\frac{\alpha\cdot(1-\alpha)}{\left(1-2\cdot\alpha\right)^2}+\frac12\cdot\left(\frac{\sa_0-\sh_0}{1-2\cdot\alpha}+\sa_0+\sh_0\right)
\end{align}
\end{lemma}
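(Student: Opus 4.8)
The plan is to project the two--dimensional process onto the single coordinate $D_t:=\sa_t-\sh_t$. While the attacker is ahead it merely waits, so in the relevant region each step increments exactly one of the two counters and $D_t$ evolves as a nearest--neighbour random walk on $\mathbb{Z}$ that moves up by $1$ with probability $\alpha$ and down by $1$ with probability $1-\alpha$; since $\alpha<\tfrac12$ this walk has strictly negative drift. The set $\lt$ is exactly $\{D_t\ge 0\}$. First I would observe that the only way to exit $\lt$ is a step from $D=0$ to $D=-1$ (from any state with $D>0$ a down--step still keeps $D\ge 0$), so the exit, hence the final visit to $\lt$, always happens on the main diagonal; this is what makes $Y_\psi^\pi=(\sa_1,\sa_1)$ well defined. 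The negative drift guarantees $D_t\to-\infty$ almost surely, so $\psi<\infty$ a.s. and $\psi$ is precisely the time of the \emph{last} visit of $D$ to $0$.

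The key simplification is a deterministic bookkeeping identity. Every step adds exactly one block, so $\sa_t+\sh_t=(\sa_0+\sh_0)+t$; on the diagonal $\sa_t=\sh_t$, whence $\sa_t=\tfrac12\big((\sa_0+\sh_0)+t\big)$. Evaluating at $t=\psi$ gives $last(\sa_0,\sh_0)=\sa_\psi=\tfrac12\big(\sa_0+\sh_0+\psi\big)$, so that
\[
\E[last(\sa_0,\sh_0)]=\tfrac12(\sa_0+\sh_0)+\tfrac12\,\E[\psi].
\]
This already produces the $\tfrac12(\sa_0+\sh_0)$ summand and reduces the whole problem to computing $\E[\psi]$, the expected last--hitting time of $0$ for the gap walk started at $d:=\sa_0-\sh_0\ge 0$.

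To evaluate $\E[\psi]$ I would split at the first visit to $0$: $\psi=T+\psi_0$, where $T$ is the first--passage time from $d$ down to $0$ and $\psi_0$ is the last--visit time of a walk freshly started at $0$ (independent, by the strong Markov property). The downward first passage has $\E[T]=d/(1-2\alpha)$, contributing the drift term $\tfrac12\cdot\tfrac{d}{1-2\alpha}$. For $\E[\psi_0]$ I would use an excursion decomposition: from $0$ the walk returns to $0$ with probability $u=2\alpha$ (an up--step returns a.s.\ by negative drift; a down--step returns with the gambler's--ruin probability $\tfrac{\alpha}{1-\alpha}$, so $u=\alpha\cdot1+(1-\alpha)\cdot\tfrac{\alpha}{1-\alpha}=2\alpha$), the number of returns is geometric with mean $u/(1-u)$, and by Wald's identity $\E[\psi_0]$ equals this mean times the expected length of a single returning excursion.

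The main obstacle is the expected length of a returning excursion. The up--excursion is routine: one step plus a first passage from $1$ to $0$, of expected length $1/(1-2\alpha)$. The delicate case is the \emph{down}--excursion conditioned on returning to $0$ against the drift; here I would apply a Doob $h$--transform with $h(x)=(\alpha/(1-\alpha))^{-x}$ (equivalently a time--reversal/reversibility argument), under which the conditioned walk becomes a nearest--neighbour walk with the up/down probabilities swapped, hence positive drift, so its expected passage from $-1$ to $0$ is again $1/(1-2\alpha)$. Both excursion types therefore have expected length $1+1/(1-2\alpha)=\tfrac{2(1-\alpha)}{1-2\alpha}$, giving $\E[\psi_0]=\tfrac{u}{1-u}\cdot\tfrac{2(1-\alpha)}{1-2\alpha}$ and a constant term of the shape $\tfrac{c\,\alpha(1-\alpha)}{(1-2\alpha)^2}$. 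I would double--check the numerical constant $c$: my bookkeeping yields $c=2$ rather than the $c=1$ written in the statement, and since every other term matches exactly, the value of $c$ hinges precisely on whether the down--excursions are counted among the returns.
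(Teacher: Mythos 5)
Your argument retraces the paper's own proof step for step: the projection onto the gap walk $D_t=\sa_t-\sh_t$, the identity $last(\sa_0,\sh_0)=\tfrac12(\sa_0+\sh_0+\psi)$, the split of $\psi$ at the first passage to the origin (contributing $\tfrac{\sa_0-\sh_0}{1-2\alpha}$), and the Wald/excursion decomposition of the residual last-exit time, including the $h$-transform observation that a returning down-excursion has the same expected duration as an up-excursion. The only divergence is the constant, and there your bookkeeping is the correct one and the printed lemma is not. The probability that the gap walk started \emph{at} the origin returns to the origin is $\alpha\cdot 1+(1-\alpha)\cdot\tfrac{\alpha}{1-\alpha}=2\alpha$; the paper instead uses $\tfrac{\alpha}{1-\alpha}$, which is the return probability from $-1$ (it omits the factor $1-\alpha$ for the initial down-step and, with it, effectively drops the almost-surely-returning up-excursions), so its $\E[N]=\tfrac{\alpha}{1-2\alpha}$ counts only half of the returning excursions and should be $\tfrac{2\alpha}{1-2\alpha}$. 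Your value can be confirmed without excursions: for the walk started at $0$, the last visit time $L$ satisfies $\Pr(L=2k)=\binom{2k}{k}\bigl(\alpha(1-\alpha)\bigr)^k\cdot(1-2\alpha)$ --- the normalization $\sum_k\binom{2k}{k}x^k=(1-4x)^{-1/2}$ with $x=\alpha(1-\alpha)$ already forces the no-return probability to be $1-2\alpha$ rather than $1-\tfrac{\alpha}{1-\alpha}$ --- and therefore
\[
\E[L]=(1-2\alpha)\cdot\sum_{k\ge1}2k\binom{2k}{k}x^k=(1-2\alpha)\cdot\frac{4\alpha(1-\alpha)}{(1-2\alpha)^{3}}=\frac{4\alpha(1-\alpha)}{(1-2\alpha)^{2}},
\]
twice the paper's value, yielding the constant term $\tfrac{2\alpha(1-\alpha)}{(1-2\alpha)^2}$, i.e.\ your $c=2$. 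A sanity check at $(\sa_0,\sh_0)=(0,0)$ for small $\alpha$ gives $\E[last]\approx 2\alpha$ (an up-step with probability $\alpha$, or a down-step followed by a successful catch-up with probability $(1-\alpha)\cdot\tfrac{\alpha}{1-\alpha}$, each forcing at least one further diagonal visit), again matching $c=2$ and not $c=1$. So your proof is correct and essentially identical in structure to the paper's; the discrepancy you flagged is a genuine factor-of-two error in the lemma as stated, which propagates into the terminal compensation of the over-paying MDP $N^T\rh$, though only into a term whose contribution the paper later shows vanishes as $T\to\infty$.
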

\begin{proof}
Note first that $last(\sa_0,\sh_0)=\frac12\cdot\left(\psi-(\sa_0-\sh_0)\right)+\sa_0$, because if the attacker created $k$ blocks after reaching $(\sa_0,\sh_0)$, the honest network needs to create precisely $k+\sa_0-\sh_0$ blocks in order to leave $\lt$. We are thus left with the task of calculating $\E\left[\psi\right]$.
%The problem of calculating the expected last time $\lt$ is visited is equivalent to the following one:
Consider a random walk on $\mathbb Z$, starting at $\sa_0-\sh_0$, with probability $\alpha$ of moving one step towards positive infinity and $(1-\alpha)$ of moving towards negative infinity. Let $\psi'$ be the time until the last visit of the origin. Observe that $\psi'$ has the same distribution as $\psi$ (!), we thus identify them with each other, henceforth.

We further break $\psi$ into stopping times: Let $N$ be the number of visits to the origin (we have $N>0$ \emph{almost surely}, since the drift is towards negative infinity). Let $\psi_1$ be the first time up to the first visit of the origin, and for $1<k\le N$, let $\psi_k$ be the time that elapsed between $\psi_{k-1}$ and the next visit to the origin.
Any two travels that begin and end at the origin are i.i.d, and, moreover, the number of such travels is independent of their lengths. Therefore, by Wald's equation, $\E\left[\psi-\psi_1\right]=\E\left[N\right]\cdot\E\left[\psi_2\right]$.

We can interpret $N$ as counting the number of failures before one success, where a success represents a visit of the origin which never returns to it (this is equivalent, \emph{almost surely}, to never returning to the nonnegative side of $\mathbb Z$). The probability of a success is $\left(1-\frac{\alpha}{1-\alpha}\right)$, implying that $\E\left[N\right]=\frac{\frac{\alpha}{1-\alpha}}{1-\frac{\alpha}{1-\alpha}}=\frac{\alpha}{1-2\cdot\alpha}$.

Whenever the walk starts at $+1$ the expected return time to the origin is $\frac{1}{1-2\cdot\alpha}$. The same expression holds for the expected return time when starting at $-1$, conditioned on a return occurring (see~\cite{STERN}). Counting the first step to $\pm 1$ as well, the expected next return to the origin, conditioned on its occurrence, is $\left(1+\frac{1}{1-2\cdot\alpha}\right)$.\footnote{Note that, starting at $+1$, the expected return time is unaffected by conditioning on an eventual return, since this occurs w.p.1.} We conclude that $\E\left[\psi-\psi_1\right] = \frac{\alpha}{1-2\cdot\alpha}\cdot \left(1+\frac{1}{1-2\cdot\alpha}\right)$.

Another result in~\cite{STERN} implies that $\E\left[\psi_1\right]=\frac{\sa_0-\sh_0}{1-2\cdot\alpha}$. We obtain:
\begin{align}
\label{eq::psiExpec}
& \E\left[\psi\right]=\frac{\alpha}{1-2\cdot\alpha}\cdot \left(1+\frac{1}{1-2\cdot\alpha}\right)+\frac{\sa_0-\sh_0}{1-2\cdot\alpha} \Longrightarrow \\
& \E\left[last(\sa_0,\sh_0)\right]=\frac12\cdot\left(\E\left[\psi \right]-(\sa_0-\sh_0)\right)+\sa_0 = \frac12\cdot\left(\E\left[\psi \right]+\sa_0+\sh_0\right)=\\ &
\frac12\cdot\left(\frac{\alpha}{1-2\cdot\alpha}\cdot \left(1+\frac{1}{1-2\cdot\alpha}\right)+\frac{\sa_0-\sh_0}{1-2\cdot\alpha}+\sa_0+\sh_0\right) = \\ &
\frac{\alpha\cdot(1-\alpha)}{\left(1-2\cdot\alpha\right)^2}+\frac12\cdot\left(\frac{\sa_0-\sh_0}{1-2\cdot\alpha}+\sa_0+\sh_0\right).
\end{align}
\qed
\end{proof}

\begin{proof}[of Proposition~\ref{prop::trunerr}]
\noindent{\emph{Part I:}}
Let $\pi$ be an optimal policy in $\mdp\rh$. Assume the game has reached state $(\sa,\sh)\in\lt$, and an oracle lets the attacker know that this is the last state in $\lt$ which the game will reach before a future \ad. Assume further that the oracle lets the attacker ``cheat'' and perform the action \ma with success probability 1 (granting him, effectively, $\gamma=1$) and even if the previous state was not in $\lt$ (ignoring thus restrictions on the feasibility of \ma). Obviously, the attacker can only benefit from this oracle, by waiting for the last state in $\lt$, and then performing \ma (it has nothing to lose by taking only the null action up to that point).

Upon which, performing \ma on the main diagonal marks the end of the first epoch, since the respective chains of the attacker and the honest network collapse, hence the next state is distributed as $X_0$ is. As a result, we may bound the accumulated immediate rewards from state $(\sa,\sh)\in\lt$ onwards, up to $\tau_1$, in $\mdp\rh$, by
\begin{equation}
\label{eq::overriderevU}
(1-\rho)\cdot\E\left[last(\sa,\sh)\right]=(1-\rho)\cdot\frac{\alpha\cdot(1-\alpha)}{\left(1-2\cdot\alpha\right)^2}+\frac12\cdot\left(\frac{\sa-\sh}{1-2\cdot\alpha}+\sa+\sh\right).
\end{equation}
This is precisely the reward given in state $(\sa,\sh)\in\lt$ with $\sa=T$, in the over-paying MDP $N^T\rh$.

We follow the same approach to bound the accumulated rewards from states $(\sa,\sh)\notin\lt$. Assume that the oracle tells the attacker whether it will ever return to the main diagonal (without adopting first) or not. Clearly, if the oracle carries the negative message, the attacker is better off adopting right away, minimizing its negative reward.\footnote{Recall it is forced to \ad at \emph{some }stage, as we've seen before.} This will imply a reward of $-\rho\cdot\sh$. On the other hand, if the oracle says the process will eventually return to the main diagonal, the attacker is better off waiting for that event. If we denote by $(\sa_0,\sa_a)$ the next arrival at $\lt$ (which is necessarily on the main diagonal), then $\E\left[\sa_0 | \text{return occurs}\right]=\frac{\sh-\sa}{1-2\cdot\alpha}$~(\cite{STERN}).

Upon which the attacker's future rewards up to $\tau_1$ are bounded from above by $(1-\rho)\cdot\E\left[last(\sa_0,\sa_0)\right]=(1-\rho)\cdot\left(\frac{\alpha\cdot(1-\alpha)}{\left(1-2\cdot\alpha\right)^2}+\sa_0\right)$, by~(\ref{eq::return_time}). Since this is linear in $\sa_0$, we conclude that the expected reward from state $(\sa,\sh)\notin\lt$, conditioned on returning to the $\lt$, is upper bounded by $(1-\rho)\cdot\left(\frac{\alpha\cdot(1-\alpha)}{\left(1-2\cdot\alpha\right)^2}+\frac{\sh-\sa}{1-2\cdot\alpha}\right)$. The probability of this event is $\left(\alpha/(1-\alpha)\right)^{\sh-\sa}$. All in all, the attacker's rewards from state $(\sa,\sh)\notin\lt$ onward are upper bounded by
\begin{align}\label{eq::adoptrevU}
&\left(1-\left(\frac{\alpha}{1-\alpha}\right)^{\sh-\sa}\right)\cdot\left(-\rho\cdot\sh\right)+\\&\nonumber\left(\frac{\alpha}{1-\alpha}\right)^{\sh-\sa}\cdot(1-\rho)\cdot\left(\frac{\alpha\cdot(1-\alpha)}{\left(1-2\cdot\alpha\right)^2}+\frac{\sh-\sa}{1-2\cdot\alpha}\right).
\end{align}
This, again, is exactly the reward given in the over-paying $N^T\rh$ when state $(\sa,\sh)\notin\lt$ with $\sh=T$ is reached.

\noindent{\emph{Part II:}}
Recall the result of Lemma~\ref{lem::strong_law_applies}:
\begin{align}
& v^{\pi}\rh= \frac{(1-\rho)\cdot \E\left[R^{1,1}(\pi)\right]-\rho\cdot\E\left[R^{2,1}(\pi)\right]}{\E[\tau_1]}.
\label{eq::stronglaw3}\end{align}
When the optimal (in $\mdp\rh$) $\pi$ is applied in $N^T\rh$, with an \ad in the truncating states, the expected epoch time cannot be greater. Therefore, if ${v^\pi\rh}={v^*\rh}\ge0$, this transformation can only increase ${v^\pi\rh}$. We conclude that if $\pi$ is optimal policy in $\mdp\rh$, then the expected average value of (the truncated version of) $\pi$ in $N^T\rh$ upper bounds $v^\pi\rh=v^*\rh$. An optimal policy of $N^T\rh$ can only do better, hence ${u^T\rh}^*\ge v^*\rh$, which concludes the involved part of the proof.

\noindent{\emph{Part III:}}
That $v^*\rh\ge{v^T\rh}^*$ is trivial, since any policy that is feasible in $\mdp^T\rh$ is feasible in $\mdp\rh$, and the rewards are identical. Finally, we show that ${u^T\rh}^*\searrow v^T\rh$. First, observe that the reward from visiting a state $(\sa,\sh)\notin\lt$, given in~(\ref{eq::adoptrevU}), converges to $\rho\cdot\sh$. Thus, as $T$ goes to infinity, the reward from these states in $N^T\rh$ converges to that of $\mdp^T\rh$. On the other hand, the probability to reach a state in $\lt$ vanishes exponentially with $T$ (e.g., by applying Chernoff's bound). The reward given in $N^T\rh$ in the truncating states of the form $(T=\sa\ge\sh)$ grows only linearly in $T$ (see~(\ref{eq::overriderevU}); $\sa$ and $\sh$ are linear in $T$). Therefore, the expected reward from these states (without conditioning on reaching them) vanishes. We thus obtain $\left({u^T\rh}^*-v^T\rh \right)\rightarrow 0$, as $T$ goes to infinity.
\qed\end{proof}

\begin{corollary}\label{cor::trunerr1}
If ${v^{T}\rh}^*\ge0$ then  $\rho+2\cdot{u^{T}\rh}^*\ge \max_{\pi'\in A}\left\{\rev(\pi')\right\}$.
\end{corollary}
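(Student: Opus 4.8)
The plan is to bound $\rev(\pi')$ for each individual policy $\pi'\in A$ separately and then take the supremum, which conveniently sidesteps having to argue that a maximizer over the infinite action space $A$ is attained. First I would record what the hypothesis buys us: since any policy feasible in $\mdp^T\rh$ is feasible in $\mdp\rh$ with identical rewards, we have $v^*\rh\ge {v^T\rh}^*\ge 0$, so Proposition~\ref{prop::trunerr} applies and yields ${u^T\rh}^*\ge v^*\rh\ge 0$. In particular ${u^T\rh}^*\ge 0$, which disposes of a trivial case below.

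The heart of the argument is an exact algebraic link between $v^{\pi'}\rh$ and $\rev(\pi')$, read off from the renewal-reward formulas of Lemma~\ref{lem::strong_law_applies} and Corollary~\ref{cor::expectation_commute_limit}. For a fixed $\pi'\in A$ write $a:=\E[R^{1,1}(\pi')]$, $b:=\E[R^{2,1}(\pi')]$ and $\tau:=\E[\tau_1]$. Then $\rev(\pi')=\frac{a}{a+b}$, while $v^{\pi'}\rh=\frac{(1-\rho)a-\rho b}{\tau}=\frac{a+b}{\tau}\bigl(\rev(\pi')-\rho\bigr)$, the last equality being just the rewriting $a-\rho(a+b)=(a+b)\bigl(\frac{a}{a+b}-\rho\bigr)$. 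The decisive quantity is the prefactor $\frac{a+b}{\tau}=\E\bigl[\lim_{T\to\infty}\frac1T\sum_{t=1}^T(r^1_t+r^2_t)\bigr]$, the mean growth rate of the main chain, which the arguments in the proofs of Lemma~\ref{lem::slope_of_rev} and Proposition~\ref{prop::correctness} confine to $[\tfrac12,1]$: it is at most $1$ because at most one block is appended per round, and at least $\tfrac12$ because every overridden block is matched by an overriding one.

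With this in hand the conclusion follows by a short case split. If $\rev(\pi')\le\rho$, then $\rev(\pi')\le\rho\le\rho+2{u^T\rh}^*$ since ${u^T\rh}^*\ge 0$. If instead $\rev(\pi')>\rho$, then $v^{\pi'}\rh\ge 0$, and using $\frac{a+b}{\tau}\ge\tfrac12$ (equivalently $\frac{\tau}{a+b}\le 2$) we get $\rev(\pi')-\rho=\frac{\tau}{a+b}\,v^{\pi'}\rh\le 2\,v^{\pi'}\rh\le 2\,v^*\rh\le 2{u^T\rh}^*$, where the middle inequality holds because $v^*\rh$ is the optimal value of $\mdp\rh$ and $\pi'$ is admissible there, and the last by Proposition~\ref{prop::trunerr}. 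In either case $\rev(\pi')\le\rho+2{u^T\rh}^*$, and taking the supremum over $\pi'\in A$ gives the claim.

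I expect the only delicate point to be the lower bound $\frac{a+b}{\tau}\ge\tfrac12$, i.e.\ that no admissible (positive recurrent) policy can depress the main-chain growth rate below one half; this is exactly the fact responsible for the factor of $2$ in the statement, and rather than re-derive it I would lean on the reasoning already used in the proofs of Proposition~\ref{prop::upperbound} and Lemma~\ref{lem::slope_of_rev}. Everything else is bookkeeping along the monotone, non-negative chain ${u^T\rh}^*\ge v^*\rh\ge {v^T\rh}^*\ge 0$.
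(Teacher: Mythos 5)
Your proposal is correct and follows essentially the same route as the paper: both rest on the chain ${u^T\rh}^*\ge v^*\rh\ge{v^T\rh}^*\ge 0$ from Proposition~\ref{prop::trunerr} together with the renewal-reward identity and the lower bound of $\tfrac12$ on the main-chain growth rate from Lemma~\ref{lem::slope_of_rev}, which is exactly where the factor of $2$ comes from. The only difference is presentational---you argue directly per policy and take a supremum, whereas the paper argues by contradiction via the implication in~(\ref{eq::epsilon_half0})--(\ref{eq::epsilon_half2}) with $\epsilon=2\cdot{u^{T}\rh}^*$.
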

\begin{proof}
The proofs in Appendix~\ref{appendix::correctness} did not use the truncation of the process. We can therefore follow the same steps as in the proof of Lemma~\ref{lem::slope_of_rev},
\emph{Part II}: Put $\epsilon=2\cdot{u^{T_0}\rh}^*$. Then $v^*\rh\ge{v^{T_0}\rh}^*\ge0$, hence ${u^T\rh}^*\ge {v\rh}^*$, by Proposition~\ref{prop::trunerr}. Similarly to the implication following~(\ref{eq::epsilon_half0})-(\ref{eq::epsilon_half2}), we can deduce that $\rho+2\cdot {u^{T_0}\rh}^*\ge\max_{\pi'\in A}\left\{\rev(\pi')\right\}$.
\qed
\end{proof}

\noindent{\bf Proposition~\ref{prop::trunerr2}.}
\emph{
If $u$ and $\rho'$ are the outcome of the computation in Algorithm~\ref{alg}, lines~\ref{line_low_rho_assign}-\ref{line_mdp_solver_2}, then $\rho'+2\cdot (u+\epsilon') > \max_{\pi'\in A}\left\{\rev(\pi')\right\}$.}
\begin{proof}
If $low\le\epsilon/4$ then $\rho'$ is assigned the value $0$. In this case, as shown above, ${v^{T_0}_{1}}^*>0$. Assume that $low>\epsilon/4$.
In the proof of Proposition~\ref{prop::correctness} it was shown that the value returned by $mdp\_solver(\mdp_{low}^{T_0},\epsilon/8)$ is positive. Therefore, ${v^{T_0}_{low}}^*>-\epsilon/8$. Applying the proof of Lemma~\ref{lem::slope_of_rev} we deduce that ${v^{T_0}_{low-2\cdot\epsilon/8}}^*+\epsilon/8>\epsilon/8$, hence ${v^{T_0}_{\rho'}}^*>0$. Corollary~\ref{cor::trunerr1} thus applies to $\rho'$, and we obtain $\rho'+2\cdot{u^{T}\rh}^*\ge \max_{\pi'\in A}\left\{\rev(\pi')\right\}$. Observing that $u+\epsilon'>{u^{T_0}\rh}^*$ completes the proof.
\qed
\end{proof}

We complete the appendix with the proof of Corollary~\ref{cor::alg_4_threshold}:\\
\noindent{\bf Corollary~\ref{cor::alg_4_threshold}:}
\emph{Fix $\gamma$ and $\alpha$. If $u$ is the value returned by $mdp\_solver(\widehat{N_{\alpha}^T},\epsilon)$, and $u\le-\epsilon$, then \hon is optimal for $\alpha$. In other words, $\hat\alpha(\gamma)\ge\alpha$.
}
\begin{proof}
If $u\le-\epsilon$ then the value of $\widehat{N_{\alpha}^T}$ is smaller than $0$. %$\widehat{{u^T_{\alpha}}^*}<0$,
If we denote by $\widehat{\mdp_{\alpha}}$ the same modification (of disabling honest mining) applied now to $\mdp_\alpha$, then the value of $\widehat{\mdp_{\alpha}}$ cannot be poisitive (similarly to Proposition~\ref{prop::trunerr}).
However, honest mining guarntees a value of $0$ in $\mdp_{\alpha}$, and we conclude that honest mining (weakly) dominates other strategies.
\qed
\end{proof}

\end{document}